\def\ps@pprintTitle{%
 \let\@oddhead\@empty
 \let\@evenhead\@empty
 \def\@oddfoot{}%
 \let\@evenfoot\@oddfoot}
\newtheorem{theorem}{Theorem}[section]
\newtheorem{proposition}[theorem]{Proposition}
\newtheorem{corollary}[theorem]{Corollary}
\newtheorem{example}[theorem]{Example}
\newproof{proof}{Proof}
\newproof{pot}{Proof of Theorem \ref{thm2}}
\begin{document}

\begin{frontmatter}

\title{{\bf Reset control systems: the zero-crossing resetting law} \tnoteref{dpi}}

\author[a1]{Alfonso~Ba\~nos\corref{cor1}} 
\ead{abanos@um.es}

\author[a2]{Antonio Barreiro} 
\ead{abarreiro@uvigo.es}

\tnotetext[dpi]{This work has been supported by {\em FEDER-EU} and {\em Ministerio de Ciencia e Innovaci\'on (Gobierno de Espa\~na)} under project DPI2016-79278-C2-1-R, and {\em Fundaci\'on S\'eneca (Comunidad Aut\'onoma de la Regi\'on de Murcia)} under project 20842/PI/18.
}

\cortext[cor1]{Corresponding author}

\address[a1]{Universidad de Murcia, Dept. Inform\'atica y Sistemas, 30100 Murcia, Spain}
\address[a2]{Universidad de Vigo, Dept. Ingenier\'ia de Sistemas y Autom\'atica, 36310 Vigo, Spain}

\begin{abstract} A novel representation of reset control systems with a zero-crossing resetting law, in the framework of hybrid inclusions, is postulated. The problems of well-posedness and stability  of the resulting hybrid dynamical system are investigated, with a strong motivation in how non-deterministic behavior is accomplished in control practice. Several stability conditions, based on the eigenstructure of matrices related with periods of the reset interval sequences, and on Lyapunov function-based conditions, are developed.  
\
\end{abstract}

\begin{keyword}
Hybrid dynamical systems, Hybrid control systems, Reset control systems, Robustness to measurement noise, Robustness, stability 
\end{keyword}

\end{frontmatter}

{\em Notation}: 
$\mathds{R}_{\geq 0}$ is the set of non-negative real numbers, $\mathds{R}^n$ is the $n$-dimensional Euclidean space, and $\mathbf{x} = (x_1,\cdots,x_n) \in \mathds{R}^n$ is a column vector; $\|\mathbf{x}\|$ is the euclidean norm.  For a matrix $A \in \mathds{R}^n\times \mathds{R}^m$, $\|A\|$ is its spectral norm.
 $\mathds{B}$ is the closed unit ball in $\mathds{R}^n$ centered at the origin. $S^n = \{\mathbf{x} \in \mathds{R}^{n+1}: \|\mathbf{x}\|=1 \}$ is the unit $n$-sphere. The set of $n \times n$ (positive definite) symmetric matrices is denoted by ${\mathds S}^n$.  
 $\mathcal{S}_{\mathcal H}(\xi)$ is the set of maximal solutions $\phi$ to the hybrid system ${\mathcal H}$ with $\phi(0,0) =\xi$. $\text{dom}$ stands for domain, and $\setminus $ denotes sets difference.  ${\mathcal O}^{n} = \mathds{R}^{n} \times \{1,-1\}$.
\section{Introduction}

Informally speaking, a reset controller is any controller, usually referred to as the {\em base controller}, that is equipped with a mechanism for zeroing some of its states, when some event occurs in the control system.  Although the term was coined in the late 90s by Hollot, Chait and coworkers (\cite{Beker04}), specifically to describe "a linear and time invariant system with mechanisms and laws to reset their states to zero", the concept was devised much earlier, in the seminal works of Clegg (\cite{Clegg}) and Horowitz and coworkers (\cite{Krishman74,Horowitz75}). Since then, reset control has considerably evolved by using different resetting laws: the original zero crossing of the error (\cite{Beker04,Banos11,Banos12,Hosse13,Banos16,Hosse21a}), sector-based resetting (\cite{Nesic08,Aangenent09,Nesic11,Tarbouriech11,Loon17}), error band (\cite{Barreiro14,Banos14}), reset at fixed instants (\cite{Guo08,Guo11}), Lyapunov function-based resetting (\cite{Prieur13}), and somehow relaxing the original concept, both including nonlinear base systems and reset to non-zero values in some cases. This has lead to a fecund research area that has been successfully applied in many practical applications, and that has opened many relevant topics in control theory and practice.   

In this work, the focus is on reset control with emphasis on the original concept, using a linear and time-invariant controller that zeroes its   state (fully or partially) when the closed-loop error signal is zero.  The main motivation has been to update and formalize previous work by the authors, that was developed in the the framework of impulsive dynamical systems (IDS), by using the hybrid dynamical systems framework (HI) of \cite{Goebel09,GSTbook}. 
Note that in the IDS framework, resetting laws are based on the exact crossing of the zero error hypersurface, and there is some fragility in detecting a zero crossing specially if measurement noise is present. Although this robustness problem has been alleviated for specific class of exogenous signals (\cite{Banos16}), it is acknowledged the HI framework is more conclusive for equipping reset control systems with good structural properties (specially when considering exogenous signals with jump discontinuities) such as continuous-dependence on initial conditions, closeness of perturbed (due for example to measurement noise) and unperturbed solutions, asymptotic stability is preserved under small perturbations, etc.     

There exist already several relevant works about reset control in the HI framework, most of them based on a sector-based resetting law (see for example \cite{Nesic08,Aangenent09,Nesic11,Tarbouriech11,Loon17}). It is important to emphasize that, in general, this resetting law produces different control solutions in comparison with the zero-crossing resetting law (see  Example 3.4 of this manuscript). Here, it is not argued that one resetting law is superior to the other, as well as in comparison with other resetting laws, they simply are different solutions that may properly work in control practice.

\subsection{Background: Hybrid dynamical systems} 
This work follows the hybrid system framework developed in \cite{GSTbook} (and references therein), that following \cite{Schutter09}, has been referred to as Hybrid Inclusions (HI) framework, and the reader is referred to \cite{GSTbook,Goebel09} for a detailed exposition of it (see also \cite{CaiTeel09} where hybrid systems with inputs are explicitly analyzed). A hybrid system $\mathcal{H}_{\bf w}$, with state ${\bf x} \in \mathds{R}^n$ and input ${\bf w} \in \mathds{R}^m$, is given by 
\begin{equation}
\mathcal{H}_{\bf w}: \left\{  \begin{array}{llll}
\mathbf{\dot{x}}=f(\mathbf{x},{\bf w}), \quad & (\mathbf{x},\mathbf{w})  \in \mathcal{C}, \\
\mathbf{x}^+=g(\mathbf{x},\mathbf{w}), \quad & (\mathbf{x},\mathbf{w}) \in \mathcal{D}.  
\end{array}
\right.
\label{eq-H}
\end{equation}
and is defined by the following data: i) the {\em flow set} $\mathcal{C}\subset \mathds{R}^n \times \mathds{R}^m$, ii) the {\em flow mapping} $f: \mathds{R}^n\times \mathds{R}^m \rightarrow \mathds{R}^{n}$, 
iii) the {\em jump set} ${\mathcal D} \subset \mathds{R}^n\times \mathds{R}^m$, and iv) the {\em jump mapping} $g:\mathds{R}^n\times \mathds{R}^m \rightarrow \mathds{R}^{n}$. 

Hybrid signals are defined as functions on hybrid time domains (\cite{Goebel06},\cite{GSTbook}). A hybrid arc ${\bf x}:\text{dom } {\bf x} \mapsto \mathds{R}^n$ is a hybrid signal in which ${\bf x}(\cdot,j)$ is locally absolutely continuous for each $j$.  A hybrid input ${\bf w}:\text{dom } {\bf w} \mapsto \mathds{R}^m$ s a hybrid signal in which ${\bf w}(\cdot,j)$ is Lebesgue measurable and locally essentially bounded for each $j$. A solution to \eqref{eq-H} is defined as a pair $({\bf x},{\bf w})$, consisting of a hybrid arc and a hybrid input with $\text{dom } {\bf x}=\text{dom } {\bf w}$ , that satisfy the dynamics of the hybrid system $\mathcal{H}_{\bf w}$ (see \cite{CaiTeel09}-\cite{Sanfelice13} for details about solution pairs to hybrid systems with inputs). 
 Note that the jump set ${\mathcal D}$ enables jumps but does not force them if there are points in which it is also possible to flow (a similar argument applies to the flow set ${\mathcal C}$); and thus if ${\mathcal C}$ are ${\mathcal D}$ are not disjoint then for a point $(\xi,\psi) \in {\mathcal C}\cap{\mathcal D}$ there may exist several solution pairs $({\bf x},{\bf w})$ to $\mathcal{H}_{\bf w}$ with ${\bf x}(0,0) = \xi$, for any hybrid input ${\bf w}$ with ${\bf w}(0,0) = \psi$.   

For ${\mathcal H}_{\bf w}$, the so-called {\em hybrid basic conditions} defined in \cite{GSTbook}  (see also regularity conditions in \cite{CaiTeel09}) are satisfied if $\mathcal{C}$ and $\mathcal{D}$ are both closed subsets of $\mathds{R}^{n}\times \mathds{R}^{m}$, and if $f$ and $g$ are continuous functions. These hybrid basic conditions guarantee that ${\mathcal H}_{\bf w}$ (without inputs, that is with ${\bf w} =0$) is well posed in the sense that their solution sets inherit several good structural properties: upper-semicontinuous dependence with respect to initial conditions, closeness of perturbed (due for example to measurement noise) and unperturbed solutions, asymptotic stability is preserved under small enough perturbations (\cite{GSTbook}).

\subsection{Organization of the manuscript}
This work is mainly devoted to develop a representation of reset control systems, with a zero-crossing resetting law and a linear and time invariant base system, in the HI framework. The focus is about well-posedness and stability, with a strong motivation in obtaining HI models that capture key properties in control practice. In Section 2, starting with a new Clegg integrator model equipped with an input zero-crossing detection (ZCD) mechanism, it is postulated a reset controller model based on it. Section 3 analyzes closed-loop reset control systems, resulting of the feedback connection of a reset controller (with the ZCD mechanism) and a linear and time invariant plant (using plant output measurement). Some basic properties of closed-loop hybrid system like well-posedness, existence of solutions and flow persistence (a concept introduced to guaranty the existence of solutions that are unbounded in the $t$-direction) are investigated. Also, a deep analysis of how solutions to the closed-loop hybrid system are related to operation in control practice. To avoid existence of defective solutions, a standard approach based on time regularization is used. Finally, a reset control system in the HI framework, with a zero-crossing resetting law and time-regularized, is postulated. An example, based on a classical case analyzed by Horowitz, is investigated with the proposed model; also a comparison with a time-regularized reset control system with a sector-based resetting law is performed. In Section 4, stability of the proposed reset control system is investigated. A basic result will be a reformulation of previous stability result in the new HI framework, relating stability of the closed-loop hybrid system with the stability of a discrete-time system obtained as a Poincar\'e-like map. Two different stability approaches are then investigated: one based on the analysis of the reset interval sequences periods, that result in analyzing eigenvalues of different matrices associated with those periods; and another based on the use of Lyapunov functions that finally results in LMIs conditions whose feasibility determine the stability of the reset control system. Moreover, several examples, that illustrates the applicability of the proposed results, are developed.

\section{From the Clegg integrator to reset controllers}


In this work, the main focus is on reset control systems in which a continuous-time plant is controlled by a reset controller with plant output feedback (see Fig. \ref{fig:controlsetup}). 
This feedback control system, that uses plant output measurement, will be modeled in the HI framework by using \eqref{eq-H}. More specifically, the plant is linear and time invariant (LTI) and single-input single-output, and described by the differential equation: 
\begin{equation}
P:\left \{
\begin{array}{l}
\dot{\mathbf{x}}_p = A_p\mathbf{x}_p + B_p{u} \\
y_p = C_p\mathbf{x}_p
\end{array}
\right.
\label{eq:P}
\end{equation}
where $\mathbf{x}_p\in \mathds{R}^{n_p}$ is the plant state, ${u} \in \mathds{R}$ the control input, $y_p \in \mathds{R}$ is the plant output, and  $A_p$, $B_p$ and $C_p$ are matrices of appropriate dimensions. The reset controller, with continuous state $\mathbf{x}_r\in \mathds{R}^{n_r}$, will be endowed with a zero-crossing detection mechanism based on a discrete state $q \in \{1,-1\}$, being finally $(\mathbf{x}_r,q) \in {\mathcal O}^{n_r} := \mathds{R}^{n_r} \times \{1,-1\}$ the controller state. In the following, the proposed reset controller will be analyzed in detail; since the controller setup will be based on a modification of the Clegg integrator (\cite{Clegg}), this is first described.  
\begin{figure}[t] 
   \centering
    \includegraphics[width=5cm]{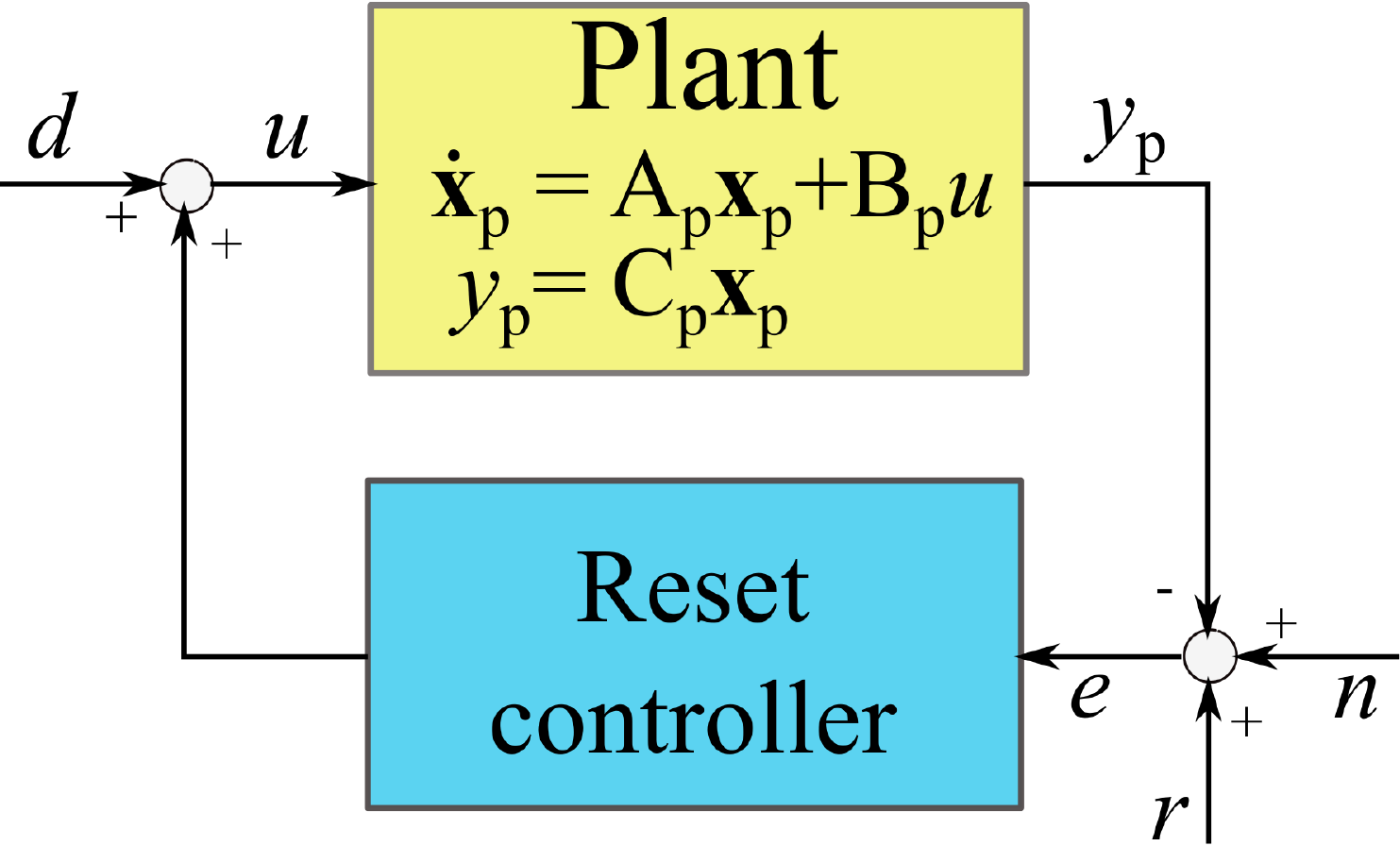}
    \caption{A reset control system, with a LTI continuous-time plant and a feedback reset controller. The feedback loop is perturbed by errors both in the measurement (noise $n$) and the actuator (disturbance $d$); $r$ is a reference signal. 
    }
    \label{fig:controlsetup}
\end{figure}

\subsection{A Clegg integrator wih a zero-crossing detection mechanism} 
A basic and well-known reset controller is the Clegg integrator (\cite{Clegg},\cite{Krishman74}), that will be adapted in this work attaching a zero-crossing detection procedure based on a the discrete state $q \in \{1,-1\}$, and also adding an extra input.  Besides the trigger input $e \in \mathbb{R}$ (usually the error signal in the case of an output feedback control system) the input signal $e_{CI} \in \mathbb{R}$ is proposed\footnote{Note that the original Clegg integrator is recovered from $\text{CI}$ by removing the discrete state $q$ and doing  $e_{CI} = e$.}. 
Using \eqref{eq-H}, the result is a new model of the Clegg integrator in the HI framework. 
 It is given by: 
\begin{equation}
\text{CI} : 
\left \{
\begin{array}{ll}
\begin{array}{l}
  \dot{x}_r    
\end{array} 
= 
\begin{array}{l}
  e_{CI}  
\end{array} 
 &\text{, } (x_r,q,e_{CI},e)  \in \mathcal{C} \\
\\
\left(
\begin{array}{l}
     {x}_r^+ \\
     {q}^+
\end{array}  \right) =
\left(
\begin{array}{cc}
  0& 0\\
  0 &-1
\end{array} \right)
\left(
\begin{array}{l}
   {x}_r\\
   q  
\end{array} \right)
&\text{, } (x_r,q,e_{CI},e)  \in \mathcal{D} 
\end{array} 
\right.
\label{eq:CI}
\end{equation}
where $(x_r,q) \in \mathcal{O}$ is the CI state, $(e_{CI},e) \in \mathds{R}^2$ is its input, and $v = x_r$ is its output, and the flow set $\mathcal{C}$ and the jump set $\mathcal{D}$ are given by 
\begin{equation}
\mathcal{C}  = \{(x_r,q,e_{CI},e) \in \mathcal{O}\times \mathds{R}^2: qe \leq0  \}
\label{eq:C-CI}
\end{equation}
and 
\begin{equation}
\mathcal{D} = \{(x_r,q,e_{CI},e) \in \mathcal{O}\times \mathds{R}^2: qe \geq0  \},
\label{eq:D-CI}
\end{equation}
respectively.  
Note that since the CI discrete state $q$ is constant when flowing, its flow equation is not explicitly shown by simplicity. The two input signals $e_{CI}$ and $e$ are useful for modeling more complex reset controllers by using $\text{CI}$ as a building block (see Fig. \ref{fig:CIyFORE}). This capability will be fully exploited by higher order reset controller in the next Section.

\begin{figure}[h]
\centering
{\includegraphics[width=0.55\textwidth]{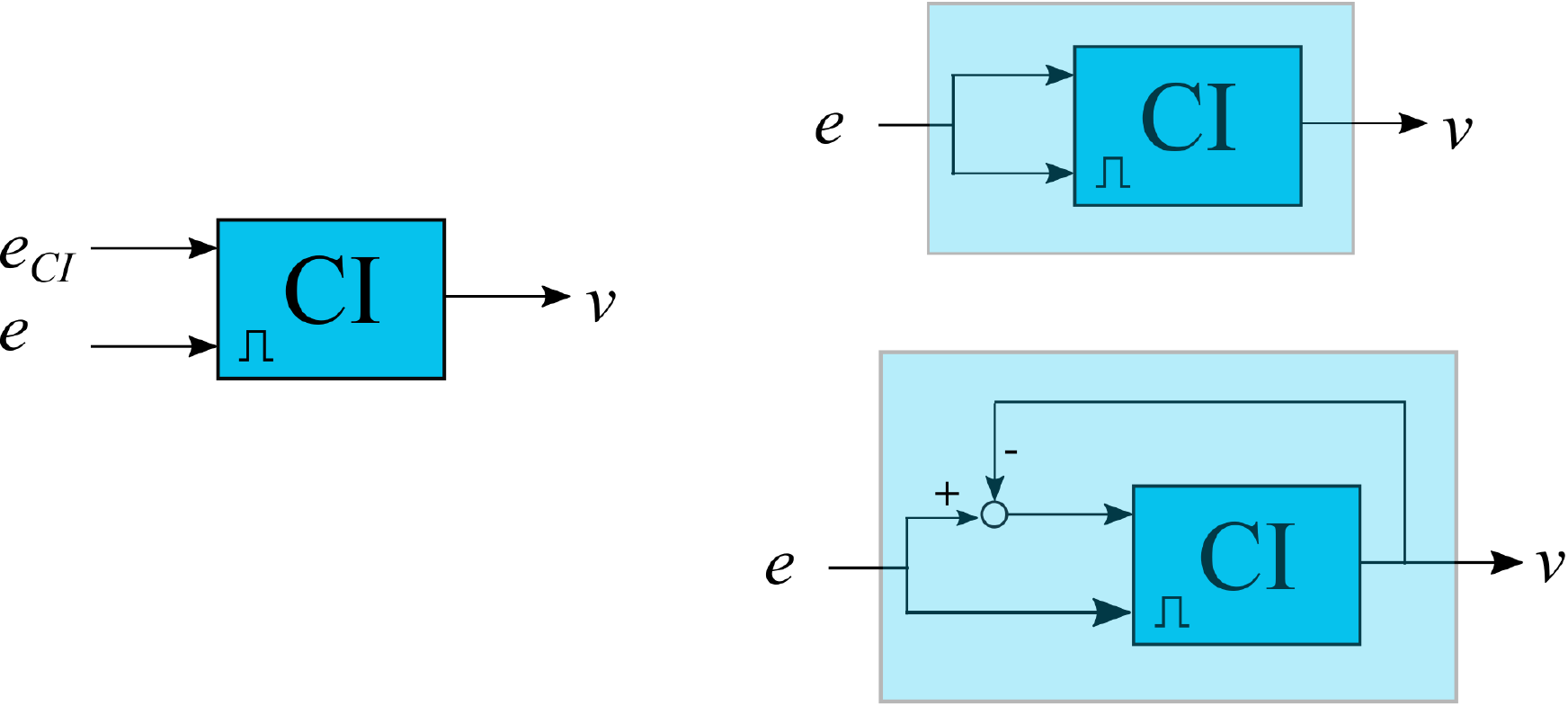}}
\caption{ ({\em left}) The block $CI$: representation of the Clegg integrator with two inputs and a zero-crossing detection; ({\em right}) The Clegg integrator with one input ({\em top}), and a first order reset element (FORE) ({\em bottom}), built upon the block CI.}.
\label{fig:CIyFORE}
\end{figure}



Moreover, subsets $\mathcal{C}_1 \subset \mathcal{C}$ and $\mathcal{D}_1 \subset \mathcal{D}$ are defined as $\mathcal{C}_1  = \{(x_r,q,e_{CI},e) \in \mathcal{C}: q = 1\}$ and $\mathcal{D}_1  = \{(x_r,q,e_{CI},e)  \in \mathcal{D}: q=1  \}$; the subsets $\mathcal{C}_{-1} \subset \mathcal{C}$ and $\mathcal{D}_{-1} \subset \mathcal{D}$ are defined accordingly. When $(x_r,q,e_{CI},e)$ goes from ${\mathcal {C}_{\pm 1}}$ to $\mathcal{D}_{\pm 1}$  either crossing or jumping through their boundary, a jump of the $\text{CI}$ state may be performed. This guarantees the detection of a zero-crossing even if the signal $e$ has jump discontinuities, for example due to some noise measurement $n$ (see Fig. \ref{fig:CI-ZCD}).  

\begin{figure}[h]
\centering
{\includegraphics[width=0.4\textwidth]{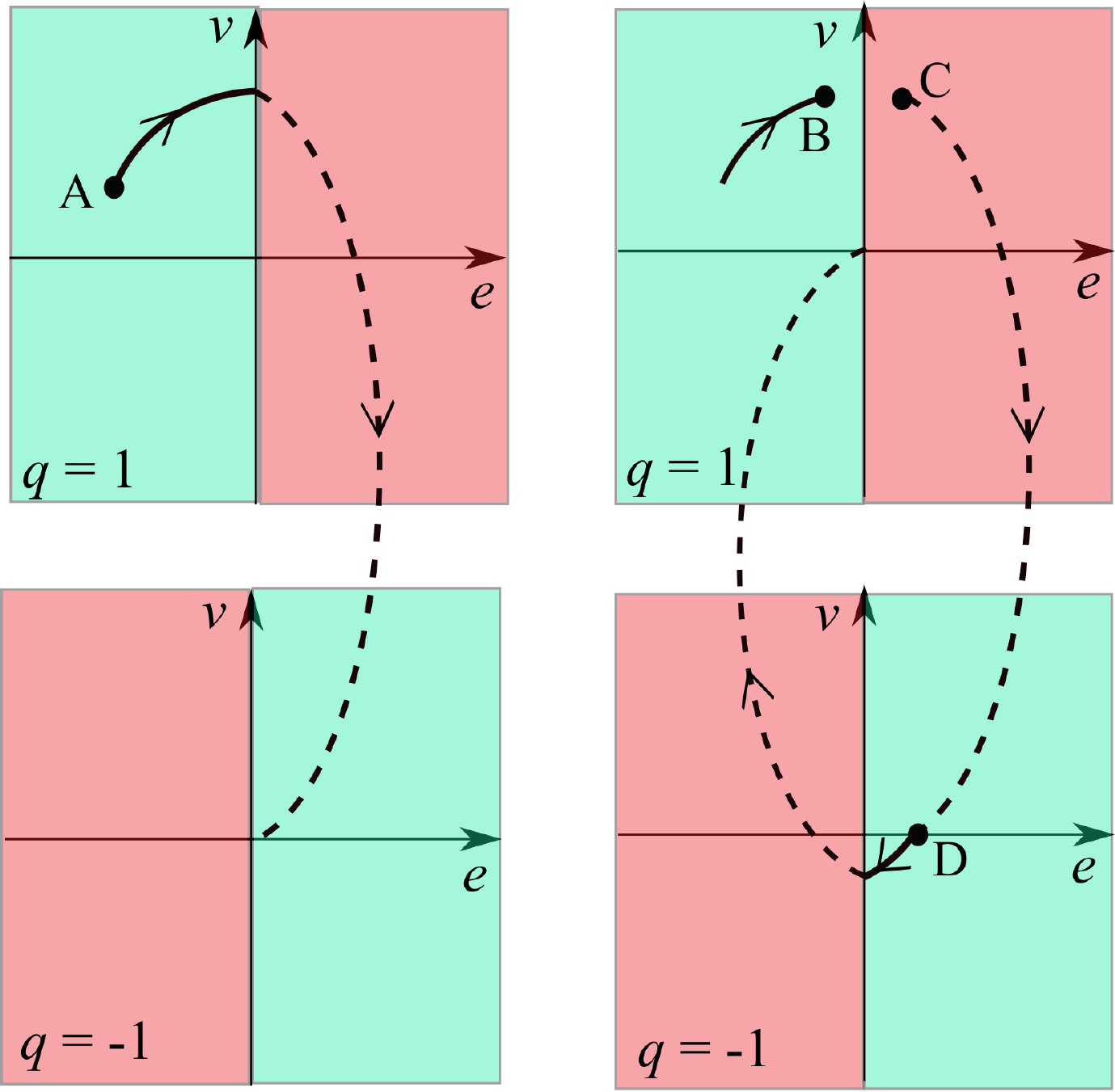}}
		\caption{ Zero-crossing detection mechanism: ({\em left}) from the initial point A ({\em up}), the system flows until a zero-crossing is detected ($qe = 0$), jumping from $\mathcal{C}_1 \cap \mathcal{D}_1$ to $(e,v,q) = (0,0,-1)$ ({\em bottom}); ({\em right}) a perturbation of the error signal at point B makes the system jump from B$\text{ } \in\mathcal{C}_1$ to C$\text{ } \in \mathcal{D}_1$  ({\em up}), then a zero-crossing is also detected ($qe>0$) and the system jumps to D$\text{ }\in \mathcal{C}_{-1} \setminus \mathcal{D}_{-1}$  ({\em bottom}); from D the system flows again and finally jumps to $(0,0,1)$ ({\em up}).}
\label{fig:CI-ZCD}
\end{figure}



\vspace{0.5cm}

\subsection{A Reset controller with zero-crossing resetting law}
In this work, it is proposed a reset controller inspired in \cite{Beker04,BBbook12,Banos16}. Here, the new $\text{CI}$ is used as a building block, and thus the reset controller has also attached a zero-crossing detection mechanism. It has a state $(\mathbf{x}_r,q) \in \mathcal{O}^{n_r}$ and a scalar input $e$. Using again \eqref{eq-H}, it is given by
\begin{equation}R:
\left \{
\begin{array}{ll}
\begin{array}{l}
     \dot{\mathbf{x}}_r  
\end{array} 
=
\begin{array}{l}
   A_r {\mathbf x}_r  
\end{array}  
+
\begin{array}{l}
  B_r e
 \end{array} 
 &\text{, } ({\mathbf x}_r,q,e) \in \mathcal{C} \\
\\
\left(
\begin{array}{l}
     {\mathbf x}_r^+ \\
     {q}^+
\end{array}  \right) =
\left(
\begin{array}{cc}
  A_\rho& 0\\
  0 & -1
\end{array} \right)
\left(
\begin{array}{l}
    {\mathbf x}_r \\
    q 
\end{array} \right)
&\text{, } ({\mathbf x}_r,q,e) \in \mathcal{D}
\end{array} 
\right.
\label{eq:ResetController}
\end{equation}
where the output and control signal is the scalar $v = C_r \mathbf{x_r} + D_r e$, and now the flow and jump sets, $\mathcal{C}$ and $\mathcal{D}$, are  
\begin{equation}
\mathcal{C}  = \{({\mathbf x}_r,q,e) \in \mathcal{O}^{n_r}\times \mathds{R}: qe \leq0  \}\end{equation}
and 
\begin{equation}
\mathcal{D} = \{({\mathbf x}_r,q,e) \in \mathcal{O}^{n_r}\times \mathds{R}: qe \geq0  \},
\label{eq:D-CI}
\end{equation}
respectively. Here $A_r$, $B_r$, $C_r$, and $D_r$ are real matrices with appropriate dimensions. And $A_\rho$ is a matrix that set to zero some of the controller states when a zero-crossing has been detected (by convention, the last $n_\rho$ states of $\mathbf{x}_r$ are set to zero, while the first $n_{\bar{\rho}} = n_r - n_\rho$ states are kept without change). It is given by 
\begin{equation}
A_{\rho}= 
\left (
\begin{array}{cc}
I_{n_{\bar{\rho}} \times n_{\bar{\rho}} } & 0_{n_{\bar{\rho}} \times {n_{\rho}}%
}\\
0_{n_{\rho} \times n_{\bar{\rho}} } & 0_{n_{\rho} \times n_{\rho} }
\end{array}
\right ),
\label{eq:AR}
\end{equation}
In the case of a \emph{full reset} controller $n_{\rho} = n_r$, while if $n_{\rho}<n_r$ then $R$ is a \emph{partial reset} controller.  In addition, $A_r$, $B_r$, and $C_r$ are partitioned into blocks with appropriate block dimensions:
\begin{equation}
   A_r =  \left (
   \begin{array}{cc}
      A_{r_{11}} &\hspace{-0.12cm} A_{r_{12}}\\
      A_{{r_{21}}} & \hspace{-0.12cm}A_{r_{22}}
   \end{array}
   \right), 
   B_r =  \left (
   \begin{array}{c}
      B_{r_{1}}\\
      B_{r_{2}}
   \end{array}
   \right), 
   C_r =  \left (
   \begin{array}{cc}
      C_{r_{1}} & \hspace{-0.12cm}  C_{r_{2}}
   \end{array}
   \right) 
\end{equation}


\begin{figure}[ht]
\centering
{\includegraphics[width=0.55\textwidth]{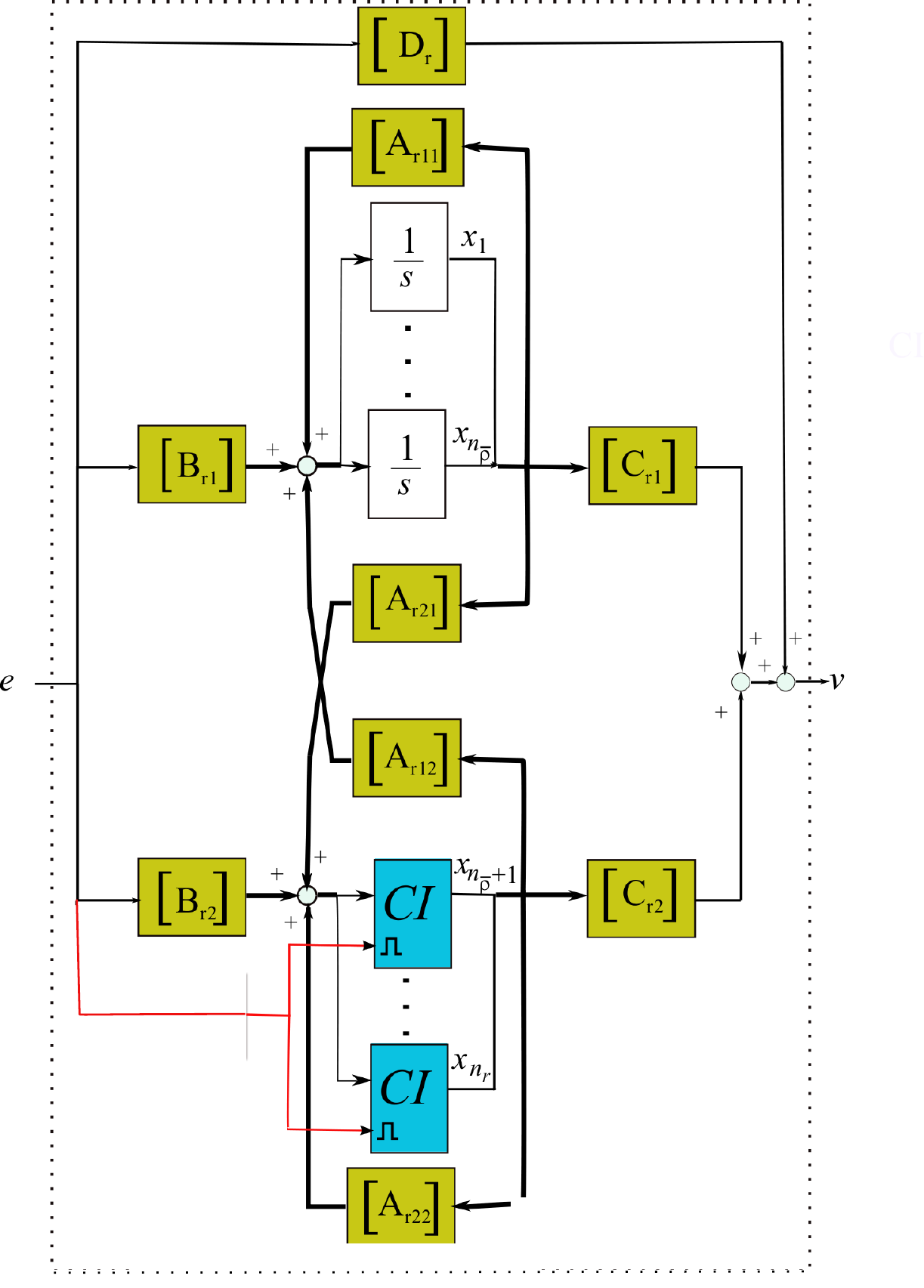}}
\caption{A block diagram of the reset controller with a zero-crossing resetting law, given by \eqref{eq:ResetController},  with state $({\bf x}_r,q)$. Here ${\bf x}_r = (x_1, \cdots,x_{n_{\bar{\rho}}}, x_{n_{\bar{\rho}}+1}, \cdots,x_{n_r})$ and $q$ is the discrete state that is common for all the CI blocks. For a block [M] the output vector is the matrix multiplication of M with the input vector (thin lines correspond to scalar signals, while thick lines correspond to vector signals).
}
\label{fig:ResetController}
\end{figure}

For a block diagram representation of the reset controller $R$, besides integrator blocks it is sufficient to use the modified Clegg integrator $\text{CI}$ given by \eqref{eq:CI} as a basic block.  
A block diagram of $R$ that allows a direct implementation is given in Fig. \ref{fig:ResetController}. 
On the other hand, if $A_{r_{21}} = O$ ($A_{r_{12}} = O$) then $R$ will be referred to as a {\em right reset controller} ({\em left reset controller}); 
the name is related with the right (left) triangular block structure of the matrix $A_r$. Informally speaking, for a right reset controller the inputs of the CI blocks are not influenced by the outputs of the integrator blocks. It is worthwhile to mention that some of  the reset controller with partial reset (see Fig. \ref{fig:RRC}) that has been found useful in practice are right reset controllers (\cite{BBbook12}).

\begin{figure}[t]
\centerline{\includegraphics[width=7cm,keepaspectratio]{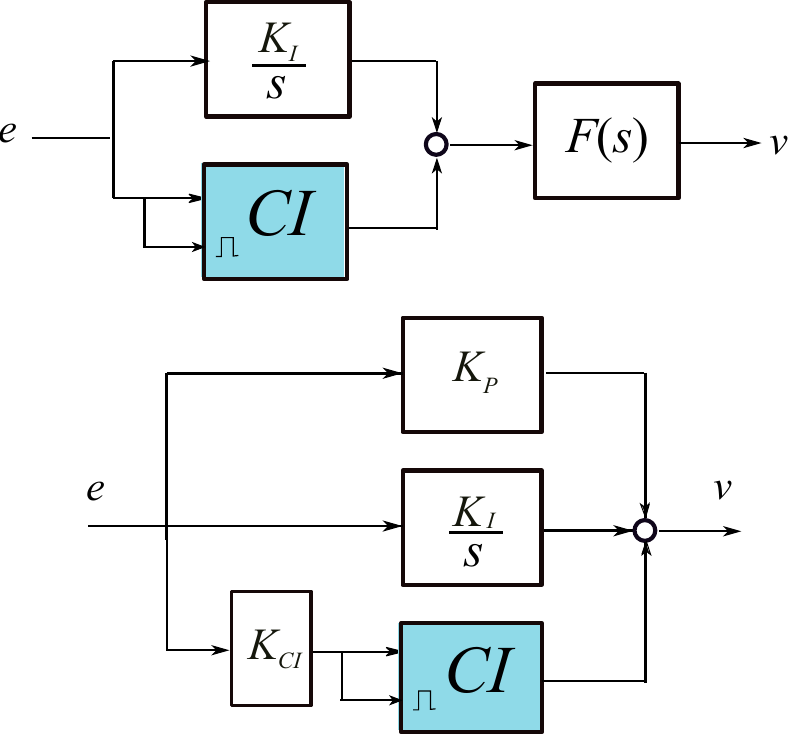}}
\caption{Right-reset controllers: ({\em top}) Horowitz reset controller (\cite{Krishman74}, $K_I$ is a parameter to be tuned and $F(s)$ a transfer function to be designed), ({\em botton}) PI+CI controller (\cite{BBbook12}, $K_P$, $K_I$ and $K_{CI}$ are controller parameters to be tuned).}
\label{fig:RRC}
\end{figure}

\section{The closed-loop reset control system} 
Once the plant and the reset controller has been defined, now the feedback control system  ${\mathcal H}^{cl}_{\bf w}$ is obtained (see Fig. \ref{fig:controlsetup}) as a hybrid control system, where the plant output $y_p$ is the feedback signal. Its state is  $(\mathbf{x}_p,\mathbf{x}_r,q) \in \mathcal{O}^{n}$, 
 and is given by 
\begin{equation} {\mathcal H}^{cl}_{\bf w}:
\left \{
\begin{array}{ll}
\left(
\begin{array}{l}
  \dot{\mathbf{x}}_p \\
  \dot{\mathbf{x}}_r 
\end{array}  \right)
=
\begin{array}{l}
  \left(
\begin{array}[c]{cccc}%
 A_p - B_pD_rC_p & B_pC_r\\
  -B_rC_p & A_r  %
\end{array}
\right) 
\left(
\begin{array}{l}
  {\mathbf{x}}_p \\
  {\mathbf{x}}_r 
\end{array}  \right)
\end{array}  
+ 
\left(
\begin{array}[c]{cccc}%
 B_pD_r & B_p\\
  B_r & O  %
\end{array}
\right) {\mathbf w}
&\text{, } (\mathbf{x},q,{\bf w}) \in \mathcal{C}_{\mathbf w}^{cl} \\
\\
\left(
\begin{array}{l}
  {\mathbf x}_r^+\\
  {q}^+
\end{array}  \right) =
\left(
\begin{array}{cc}
  A_\rho& 0\\
  0 &-1
\end{array} \right)
\left(
\begin{array}{l}
   {\mathbf x}_r\\
   {q}
\end{array} \right)
&\text{, } (\mathbf{x},q,{\bf w}) \in \mathcal{D}_{\mathbf w}^{cl} 
\end{array} 
\right.
\label{eq:Hclw}
\end{equation}
where the exogenous input is  ${\bf w}= (w_1,w_2) = (r+n,d)$, and the flow and jump sets are given by  
\begin{equation}
\mathcal{C}_{\mathbf w}^{cl}  = \{ (\mathbf{x}_p,\mathbf{x}_r,q,{\bf w}) \in \mathcal{O}^{n_p+n_r} \times \mathds{R}^2: q(w_1 - C_p{\bf x}_p) \leq 0
\} \label{eq:Cclw}
\end{equation}
and
\begin{equation}
\mathcal{D}_{\mathbf w}^{cl}  = \{ (\mathbf{x}_p,\mathbf{x}_r,q,{\bf w}) \in \mathcal{O}^{n_p+n_r}  \times \mathds{R}^2: q(w_1 - C_p{\bf x}_p) \geq 0
\},\label{eq:Dclw}
\end{equation}
respectively.


\subsection{Closed-loop solutions and properties} 
Following Section 1.2, solutions to ${\mathcal H}^{cl}_{\bf w}$ are defined as pairs $({\bf x},{\bf w})$ satisfying \eqref{eq:Hclw}, where 
${\bf x}$ is a hybrid arc and ${\bf w}$ is a hybrid input. Since ${\mathcal H}^{cl}_{\bf w}$ satisfies the hybrid basic conditions (the flow and jump maps are continuous, and the flow and the jump sets are closed), it directly follows that for the case of no exogenous inputs, that is for ${\bf w} = {\bf 0}$, the system ${\mathcal H}^{cl}_{\bf w}$ is well posed and the property of asymptotic stability is robust (see \cite{GSTbook} for precise definitions and results). 

Now, a key question is to analyze if ${\mathcal H}^{cl}_{\bf w}$ has also good properties for relevant sets of exogenous inputs. Firstly, the problem of modeling relevant exogenous inputs arises; since the plant is originally continuous-time, it is natural to consider that exogenous signals only depend on time $t$ and not on $j$, and thus hybrid inputs ${\bf w}$ can be considered as given by setting ${\bf w}(t,j) = {\bf w}'(t)$ for all $(t,j) \in E$, for some continuous time signal ${\bf w}'$ and any arbitrary time domain $E$. Moreover, since exogenous inputs should be relevant in control practice, it will be assumed that ${\bf w}'$ is generated by an exosystem $\Sigma$ with state ${\mathbf x}_{w}$, that is

\begin{equation} {\Sigma}:
\left \{
\begin{array}{l}
\begin{array}{l}
  \dot{\mathbf{x}}_w  
\end{array}
=
\begin{array}{l}
  A_w {\mathbf x}_w \end{array}  
\\
{\mathbf w} =
\left(
\begin{array}{c}
  C_{w_1}\\
  C_{w_2} 
\end{array} \right)
{\mathbf x}_w
\end{array} 
\right.
\label{eq:ExoS}
\end{equation}

These exosystems will allow to generate signals like steps, ramps, sinusoids, and in general Bohl functions. Finally, these exosystems are embedded in the reset control system \eqref{eq:Hclw} resulting in the (autonomous) reset control system ${\mathcal H}^{cl}$ with state $ ({\mathbf x},q)=({\mathbf x}_{w},{\mathbf x}_p,{\mathbf x}_r,q)$:
\begin{equation} {\mathcal H}^{cl}:
\left \{
\begin{array}{ll}
\begin{array}{l}
  \dot{\mathbf{x}}  
\end{array}
=
\begin{array}{l}
  A {\mathbf x} \end{array}  
&\text{, } ({\mathbf x},q) \in \mathcal{C}^{cl} \\
\\
\left(
\begin{array}{l}
  {\mathbf x}^+\\
  {q}^+
\end{array}  \right) =
\left(
\begin{array}{cc}
  A_R& 0\\
  0 &-1
\end{array} \right)
\left(
\begin{array}{l}
   {\mathbf x}\\
   {q}
\end{array} \right)&\text{, } ({\mathbf x},q) \in \mathcal{D}^{cl} 
\end{array} 
\right.
\label{eq:Hcl}
\end{equation}
where the matrices $A$ and $A_R$ are given by 
\begin{equation}
\begin{array}{ccc}
 {A}=\left(
\begin{array}[c]{ccc}%
A_w&O&O\\
 B_p(D_r C_{w_1}+C_{w_2})&A_p - B_pD_rC_p & B_pC_r\\
  B_rC_{w_1}&-B_rC_p & A_r  %
\end{array}
\right),  &
{A_R}=\left(
\begin{array}[c]{cccc}%
I&O&O\\
O& I & O\\
 O& O& A_\rho  %
\end{array}
\right), 
 \end{array} \label{eq:AAR}
 \end{equation}
and the flow and jump sets are given by  
\begin{equation}
\mathcal{C}^{cl}  = \{ ({\mathbf x},q) \in \mathcal{O}^{n}: qC{\bf x} \leq 0
\} \label{eq:Ccl}
\end{equation}
and
\begin{equation}
\mathcal{D}^{cl}  = \{ ({\mathbf x},q) \in \mathcal{O}^{n}: qC{\bf x} \geq 0
\},\label{eq:Dcl}
\end{equation}
respectively, where $n = n_w+n_p+n_r$, and $C$ is given by 
\begin{equation}
C = 
\left(
\begin{array}{ccc}
 C_{w_1} & -C_p   & O 
\end{array}
\right)
\label{eq:C}
\end{equation}

The next proposition analyzes the existence of solutions to ${\mathcal H}^{cl}$ and some properties that will be useful in control practice. For definitions of well-posedness and Zeno solutions see \cite{GSTbook}.  ${\mathcal H}^{cl}$ is {\em flow persistent} if for any $\xi \in \mathcal{O}^n$ there exist a solution $\phi$ to ${\mathcal H}^{cl}$ with $\phi(0,0) = \xi$, such as $\text{dom }\phi$ is unbounded in the $t$-direction, that is the set $\{ t \in \mathbb{R}_{\ge 0}: \exists j \in \mathbb{N} \text{ such that } (t,j) \in \text{dom }\phi \}$ is not upper-bounded. 

\begin{proposition}
\label{th:existSolution}
Consider the reset control system ${\mathcal H}^{cl}$, and a point $\xi = (\mathbf{x}_0,q_0) \in \mathcal{O}^n$, then:
\begin{enumerate}

\item (Well-posedness)  ${\mathcal H}^{cl}$ is well-posed. 
\item (Existence of solutions) 
There exist nontrivial solutions $\phi$ to ${\mathcal H}^{cl}$ with $\phi(0,0) = \xi$, and if $\phi \in {\mathcal S}_{{\mathcal H}^{cl}}(\xi)$ then it is complete, that is ${\mathcal H}^{cl}$ is forward complete from $\mathcal{O}^n$. 
\item (Flow persistence) ${\mathcal H}^{cl}$ is flow persistent.

\end{enumerate}
\end{proposition}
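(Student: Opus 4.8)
The plan is to treat the three claims in order, leaning on the standard existence/regularity machinery of \cite{GSTbook} for the first two and on the special algebraic structure of $A_R$ and $C$ for the third.

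For well-posedness I would simply verify the hybrid basic conditions for $\mathcal{H}^{cl}$. The flow and jump maps in \eqref{eq:Hcl} are linear, hence continuous; the sets $\mathcal{C}^{cl}$ and $\mathcal{D}^{cl}$ in \eqref{eq:Ccl}--\eqref{eq:Dcl} are the preimages of $(-\infty,0]$ and $[0,\infty)$ under the continuous map $(\mathbf{x},q)\mapsto qC\mathbf{x}$, intersected with the closed set $\mathcal{O}^n=\mathds{R}^n\times\{1,-1\}$, so both are closed. Well-posedness then follows from the cited result that the hybrid basic conditions imply well-posedness.

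For existence and completeness I would first check the viability (tangent-cone) condition only on $\mathcal{C}^{cl}\setminus\mathcal{D}^{cl}=\{qC\mathbf{x}<0\}$: there $q$ is locally frozen and the constraint is inactive, so $T_{\mathcal{C}^{cl}}(\xi)=\mathds{R}^n\times\{0\}$ contains the flow direction $(A\mathbf{x},0)$. By the general existence/viability theorem for hybrid systems \cite{GSTbook} this yields a nontrivial solution from every point of $\mathcal{C}^{cl}\cup\mathcal{D}^{cl}=\mathcal{O}^n$. To upgrade to completeness I would rule out the two non-complete alternatives: finite escape is impossible because the flow is the linear ODE $\dot{\mathbf{x}}=A\mathbf{x}$, whose solutions $e^{At}\mathbf{x}_0$ exist for all $t$; and a maximal solution cannot terminate by leaving the state space, since the jump map carries $\mathcal{O}^n$ into $\mathcal{O}^n=\mathcal{C}^{cl}\cup\mathcal{D}^{cl}$ and every flow either continues in $\mathcal{C}^{cl}$ or reaches a point of $\mathcal{D}^{cl}$ from which a jump is available. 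Hence every maximal solution is complete, i.e. $\mathcal{H}^{cl}$ is forward complete from $\mathcal{O}^n$.

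Flow persistence is the substantive part, and here I would exploit two identities: $CA_R=C$ (because $A_R$ in \eqref{eq:AAR} acts as the identity on the $(\mathbf{x}_w,\mathbf{x}_p)$-block that $C$ in \eqref{eq:C} reads, and only annihilates part of $\mathbf{x}_r$, which $C$ ignores) and the idempotency $A_R^2=A_R$. The first identity shows the scalar output $C\mathbf{x}$ is continuous across jumps and that a jump sends $qC\mathbf{x}\ge0$ to $q^{+}C\mathbf{x}^{+}=-qC\mathbf{x}\le0$; thus every jump lands in $\mathcal{C}^{cl}$. The second identity shows that if a second consecutive jump is forced, it returns $\mathbf{x}$ to its pre-jump value while restoring $q$, after which flow is necessarily admissible; hence at most two jumps occur without an intervening flow. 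I would then build the candidate flow-persistent solution by flowing maximally and jumping only when $C\mathbf{x}$ would otherwise force a sign violation. Along any flow the output $t\mapsto Ce^{At}\mathbf{x}_0$ is a Bohl function (exponential polynomial): either it vanishes identically, in which case the solution flows for all $t\ge0$ and we are done, or its zeros are isolated, so each flow has positive duration. The main obstacle is to show that this maximally-flowing solution is genuinely unbounded in the $t$-direction, i.e. to exclude an accumulation of reset instants at some finite $t^{\ast}$ (Zeno behavior). I would argue by contradiction: if reset times $\tau_k\nearrow t^{\ast}$, then, since jumps leave $(\mathbf{x}_w,\mathbf{x}_p)$ unchanged, $C\mathbf{x}$ is continuous and bounded on $[0,t^{\ast}]$, the post-jump states lie in a compact set, and on each intervening flow the analytic output vanishes at both endpoints; passing to the limit forces the output and its derivatives to vanish at $t^{\ast}$, contradicting the isolated-zero structure unless the output is identically zero, which is the already-settled flow-forever case. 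This no-accumulation step is the delicate point, and is precisely the issue that later motivates time-regularization of the model.
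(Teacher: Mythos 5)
Your treatment of parts 1 and 2 is correct and follows the same route as the paper: verify the hybrid basic conditions for well-posedness, and for completeness use the identity $CA_R = C$ (so that a jump sends $qC\mathbf{x}\geq 0$ to $q^{+}C\mathbf{x}^{+} = -qC\mathbf{x}\leq 0$, i.e.\ $g(\mathcal{D}^{cl})\subset\mathcal{C}^{cl}$) together with the absence of finite escape times for the linear flow; this is exactly the paper's argument via Prop.~2.10 of the hybrid systems reference.

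Part 3 is where the two proofs diverge, and your version has a genuine gap. Both you and the paper reduce flow persistence to excluding an accumulation of jump instants at a finite time $t^{\ast}$ along the maximally-flowing solution; but the paper closes this step by invoking an external structural result (Prop.~2.4 of the cited book, also used in the 2016 reference): a reset interval sequence cannot contain a decreasing subsequence of length greater than $n-n_{\rho}$, which immediately forbids $t_{k+1}-t_k\to 0$. Your replacement argument does not work as sketched, for three concrete reasons. First, "the post-jump states lie in a compact set" is not enough: the post-jump states may converge to $\mathbf{0}$ as $t_k\nearrow t^{\ast}$, in which case every condition you extract in the limit ($C\mathbf{z}^{\ast}=0$, $CA\mathbf{z}^{\ast}=0$, \ldots) is vacuously true and no contradiction appears; to avoid this you must normalize, i.e.\ work with $\mathbf{s}_k=\mathbf{z}_k/\|\mathbf{z}_k\|$ and exploit homogeneity. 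Second, "passing to the limit forces the output and its derivatives to vanish at $t^{\ast}$" does not follow from what you have: on each flow interval the output vanishes at the two endpoints, so Rolle gives exactly one first-derivative condition per interval; obtaining higher-order conditions requires propagating information across many consecutive intervals through the jump relation $\mathbf{z}_{k+1}=A_Re^{A\Delta_k}\mathbf{z}_k$, and this dimension-counting induction (which is where the bound $n-n_{\rho}$ in the cited result comes from) is precisely what your sketch omits. Third, even granting that the limit direction $\mathbf{s}^{\ast}$ satisfies $CA^{j}\mathbf{s}^{\ast}=0$ for all $j$, this says the limit direction is unobservable; it is not "the already-settled flow-forever case," because that case concerned the original initial condition $\xi$, not the limit of the Zeno tail. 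The output along your candidate solution is only piecewise analytic (each reset changes the analytic piece), so there is no single analytic function whose isolated-zero structure is contradicted. In short: you correctly identified the delicate point, but the argument you propose for it does not close, whereas the paper closes it by citing a nontrivial known result rather than reproving it.
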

\begin{proof}
\begin{enumerate}
\item It directly follows since  ${\mathcal H}^{cl}$ satisfies the basic hybrid conditions. Note that $\mathcal{C}$ and $\mathcal{D}$ are closed, and the functions $f,g:\mathcal{O}^n \rightarrow \mathcal{O}^n$, defining the flowing and jumping dynamics, respectively, and given by  
$f(({\mathbf x},q))= (A\mathbf{x},0)$ and $g(({\mathbf x},q))= (A_R{\mathbf x},-q)$, are continuous.
\item If $\xi \in \mathcal{C} \setminus \mathcal{D}$ then there exists a solution $\phi$ with $\phi(t,0) = (e^{At}\mathbf{x}_0,q_0) \in \mathcal{C}$ for $t \in [0,\epsilon]$ and $\epsilon = \min\{t \in \mathds{R}^+ : Ce^{At}\mathbf{x}_0 = 0\}$; in the case that $Ce^{At}\mathbf{x}_0 \neq 0$ for any $t > 0$ then $\phi(t,0) = (e^{At}\mathbf{x}_0,q_0) \in \mathcal{C}$ for $t \in [0,\infty)$. Also, if $\xi \in  \mathcal{D}$ then there exists solutions $\phi$ with $\phi(0,1) = (A_R {\mathbf x}_0,-q_0)$ and $\phi(0,0) = \xi$. Thus there exists a nontrivial solution to ${\mathcal H}^{cl}$ starting from $\xi$. Moreover, for $({\mathbf x},q) \in {\mathcal D}^{cl}$ it is true that $qC{\mathbf x}\geq 0$, and thus after a jump $q^+C{\mathbf x}^+ = -qCA_R{\mathbf x} = -qC{\mathbf x} \leq 0$, that is $({\mathbf x}^+,q^+) \in {\mathcal C}^{cl}$ and thus $g({\mathcal D}^{cl}) \subset {\mathcal C}^{cl}$; since, in addition, any solution to the flow equation $\dot{{\mathbf x}} = A {\mathbf x}$  defined on an interval, open to the right, can be trivially extended to an interval including the right endpoint, it is concluded that any maximal solution is complete (\cite{GSTbook}, Prop. 2.10). 
\item 
The only obstacle for the existence of solutions that are unbounded in the $t$-direction is that when the system jumps from ${\mathcal C}^{cl}\cap{\mathcal D}^{cl}$ to ${\mathcal C}^{cl} \setminus {\mathcal D}^{cl}$, and flows again to ${\mathcal C}^{cl}\cap{\mathcal D}^{cl}$ repeating the sequence, the result is a jump instants sequence that is convergent to a finite time value. In this case, the system solution $\phi$ would be a Zeno solution with $\text{dom }\phi = [t_0,t_1]\times\{0\} \cup  [t_1,t_2]\times\{1\} \cup \cdots $, and $0= t_0 < t_1 < \cdots$. But this is not possible since sequences of jumps instants $\{t_i\}_{i=0}^\infty$ can not have subsequences of decreasing jump instants of length greater than $n-n_\rho$ (see \cite{BBbook12}, Prop. 2.4, also \cite{Banos16}). Thus, there always exists a system solution that is unbounded in the $t$-direction.
\end{enumerate}
$\Box$
\end{proof} 

\subsection{Analysis of defective solutions and time-regularized reset control system}
In a reset control system formulated as \eqref{eq:Hcl}, in which the hybrid dynamics is due to the controller (the plant is a LTI continuous-time system), it is important to analyze how hybrid time domains of solutions (see Appendix A) and the non-deterministic behavior of the system are related with the final operation in control practice. For example, for a hybrid time domain that consists of the union of intervals $I^j \times j = [t_j,t_{j+1}]\times j$, with $0 = t_0 < t_1 < t_2 = t_3 = t_4 = t_5 < t_6 <  \cdots$, the solution flows in the time interval  $[t_0,t_1]$, jumps at $t_1$, flows in $[t_1,t_2]$, then it performs three consecutive jumps, keeps flowing in $[t_5,t_6] = [t_2,t_6]$, performs again a jump at $t_6$, etc. From a practical point of view, for solutions to be implemented in a controller, it is necessary to assume that the controller is able to perform a finite number of consecutive jumps instantaneously. In addition, it is compulsory that from any point there always exist solutions that are unbounded in the t-direction. Otherwise, the control system only would present Zeno solutions (genuinely or eventually discrete), that simply can not be implemented in practice, and are considered as defective solutions.

It has been introduced a property, {\em flow persistence}, that is useful to analyze wether a reset control system may be effectively used in control practice regarding the existence of non-defective solutions. Note that if a control system is flow persistent then there always exists a solution which is unbounded in the t-direction; on the contrary, if it is not flow persistent then there may exist points from where all the solutions are bounded in the t-direction, that is there would exist only defective solutions. Thus, although flow persistence is a necessary property in control practice, it is less obvious whether it is a sufficient property, that is, (for a given initial point) is it a problem the existence of Zeno solutions besides solutions that are unbounded in the $t$-direction?. Note that there is always infinite Zeno solutions starting at the points $({\mathbf 0},1), ({\mathbf 0},-1) \in {\mathcal C}^{cl}\cap {\mathcal D}^{cl}$, besides an infinite number of  solutions unbounded in the $t$-direction. 

Another important aspect regarding the final implementation of the hybrid controller is related with its non-deterministic behavior. In principle, the above formulation allows a (finite or infinite) number of different solutions from some initial points. In practice, it is clear that any realistic controller implementation entails a decision such as a solution is selected within all the existing solutions. At this point, a possible answer to the above question is that there is no problem once it is assumed that the controller is able to choose only the solutions that are unbounded in the $t$-direction. However, this type of implementation would require some procedure to properly select the implementable solutions.

 A more simple and common approach to implement the non-deterministic behavior is assume that it is irrelevant the solution that the controller selects, and thus the reset control system would correctly performs for any chosen solution. This is the approach to be followed in this work, and thus it is necessary to remove all the defective solutions. A standard way to avoid the existence of defective solutions is to perform a {\em time regularization} of ${\mathcal H}^{cl}$, introducing a timer $\tau \in [0,\infty)$ that prevent the system to perform two o more consecutive jumps, simply by initializing $\tau$ to $0$ after a jump, and avoiding to perform a new jump until $\tau \geq \tau_m$, where $\tau_m >0$ is a design parameter (usually referred to as the {\em minimum dwell-time}). 

A time-regularized reset control system ${\mathcal H}^{cl}_\tau$ is given by:
\begin{equation} 
 {\mathcal H}^{cl}_\tau:
\left \{
\begin{array}{lll}
\dot{\tau} = 1, &
\begin{array}{l}
  \dot{\mathbf{x}}  
\end{array}
=
\begin{array}{l}
  A {\mathbf x} \end{array}  
&\text{, } ({\mathbf x},q,\tau) \in {\mathcal C}^{cl}_\tau
\\
\tau^+ = 0, &
\left(
\begin{array}{l}
  {\mathbf x}^+\\
  {q}^+
\end{array}  \right) =
\left(
\begin{array}{cc}
  A_R& 0\\
  0 &-1
\end{array} \right)
\left(
\begin{array}{l}
   {\mathbf x}\\
   {q}
\end{array} \right)&\text{, } ({\mathbf x},q,\tau) \in {\mathcal D}^{cl}_\tau
\end{array} 
\right.
\label{eq:Hclrho}
\end{equation}
where
\begin{equation}
{\mathcal C}^{cl}_\tau = \mathcal{C}^{cl}\times [0,\infty) \cup \mathcal{D}^{cl}\times [0,\tau_m]
\label{eq:Cclrho}
\end{equation}
and 
\begin{equation}
{\mathcal D}^{cl}_\tau = \mathcal{D}^{cl} \times [\tau_m,\infty). 
\label{eq:Dclrho}
\end{equation}
The following property of of ${\mathcal H}^{cl}_\tau$ easily follows. 

\begin{corollary}
For any $\tau_m >0$, ${\mathcal H}^{cl}_\tau$ is flow persistent and does not have Zeno solutions.
\end{corollary}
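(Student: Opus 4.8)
The plan is to read this statement as a direct refinement of Proposition~\ref{th:existSolution}: the timer $\tau$ only adds a dwell-time constraint, so I would first isolate the two structural facts about ${\mathcal C}^{cl}_\tau$ and ${\mathcal D}^{cl}_\tau$ that do the work, and then combine them exactly as in the proof of that Proposition. The first fact is that, since one of $qC{\mathbf x}\le 0$ or $qC{\mathbf x}\ge 0$ always holds, \eqref{eq:Cclrho}--\eqref{eq:Dclrho} give
\begin{equation}
\mathcal{O}^n\times[0,\tau_m]\subset {\mathcal C}^{cl}_\tau,\qquad {\mathcal C}^{cl}_\tau\cup {\mathcal D}^{cl}_\tau = \mathcal{O}^n\times[0,\infty);
\label{eq:coverplan}
\end{equation}
so no reachable point is a dead end, and immediately after any jump (where $\tau^+=0$) flowing is possible for at least $\tau_m$ units of ordinary time. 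The second fact is the same viability identity used in Proposition~\ref{th:existSolution}: because $CA_R=C$ one has $q^+C{\mathbf x}^+=-qC{\mathbf x}\le 0$, so the jump map lands in the flow set, i.e. $g({\mathcal D}^{cl}_\tau)\subset {\mathcal C}^{cl}_\tau$.

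For the \emph{no Zeno} claim I would argue purely from the dwell time. Every jump resets $\tau$ to $0$, and a jump can occur only from ${\mathcal D}^{cl}_\tau=\mathcal{D}^{cl}\times[\tau_m,\infty)$, that is, when $\tau\ge\tau_m$; since $\dot\tau=1$ during flow, two consecutive jump instants are separated by at least $\tau_m$ units of ordinary time $t$. Hence on any bounded $t$-interval a solution performs only finitely many jumps, which rules out a jump-instant sequence converging to a finite time and therefore excludes Zeno solutions.

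For \emph{completeness} I would invoke the existence/extension result (\cite{GSTbook}, Prop.~2.10) exactly as in part~2 of Proposition~\ref{th:existSolution}: the flow map $(\mathbf x,\tau)\mapsto(A\mathbf x,1)$ is linear in $\mathbf x$ and affine in $\tau$, so no solution escapes in finite time; by \eqref{eq:coverplan} the solution never reaches a point outside ${\mathcal C}^{cl}_\tau\cup{\mathcal D}^{cl}_\tau$; and $g({\mathcal D}^{cl}_\tau)\subset{\mathcal C}^{cl}_\tau$ guarantees that a solution ending in a jump can always be continued by flowing. These are precisely the hypotheses that make every maximal solution complete.

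Finally, \emph{flow persistence} follows by combining the two previous points: a complete solution whose domain were bounded in the $t$-direction would necessarily accumulate infinitely many jumps on a bounded $t$-interval, i.e. be Zeno, which the dwell-time argument has excluded. Hence every complete solution is unbounded in the $t$-direction, and since a complete solution exists from every $\xi$, the system is flow persistent. The step requiring the most care is completeness: one must check that the added timer coordinate does not create states from which the system can neither flow nor jump, which is exactly what the covering property \eqref{eq:coverplan} and the viability inclusion $g({\mathcal D}^{cl}_\tau)\subset{\mathcal C}^{cl}_\tau$ rule out; once these are in place, the remainder is a transcription of Proposition~\ref{th:existSolution}.
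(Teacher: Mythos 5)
Your proposal is correct and uses essentially the same argument as the paper: the key fact in both is that every jump resets the timer to zero, so the state lands in ${\mathcal C}^{cl}_\tau \setminus {\mathcal D}^{cl}_\tau$ and must flow for at least $\tau_m$ before another jump is enabled, which simultaneously rules out Zeno solutions and yields flow persistence. The paper states this in one line and leaves the covering of $\mathcal{O}^n\times[0,\infty)$, the identity $CA_R=C$, and the completeness step implicit (they are inherited from Proposition~\ref{th:existSolution}), whereas you spell them out explicitly; this is elaboration, not a different route.
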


\begin{proof}
It trivially follows since for any $(\mathbf{x},q,\tau) \in {\mathcal D}^{cl}_\tau$ it results $(\mathbf{x}^+,q^+,\tau^+)= (A_R\mathbf{x},-q,0)  \in {\mathcal C}^{cl}_\tau \setminus {\mathcal D}^{cl}_\tau$, that is ${\mathcal H}^{cl}_\tau$ always jumps from ${\mathcal D}^{cl}_\tau$ to the interior of ${\mathcal C}^{cl}_\tau$, and then it   flows for at least a time $\tau_m > 0$. 
$\Box$
\end{proof}

In principle, an election of a small value of the minimum dwell-time $\tau_m$ is all what is needed to prevent the existence of defective solutions. Note, however, that this does not avoid the existence of multiple solutions for some initial conditions, this is for example the case of points $({\mathbf 0},1,0)$ and $({\mathbf 0},-1,0)$. 
This non-deterministic behavior will be explored in the next example. 

\begin{example}
Consider the reset control system of Fig. \ref{fig:ICI}, composed by a Horowitz reset controller $R$ and a first-order plant. It will be analyzed its  flow persistence for a exogenous input $w = r$ corresponding to a step reference (no disturbances are present), as well as the influence of $\tau_m$ on the reset control systems solutions.
\begin{figure}[t]
\centerline{\includegraphics[width=11cm,keepaspectratio]{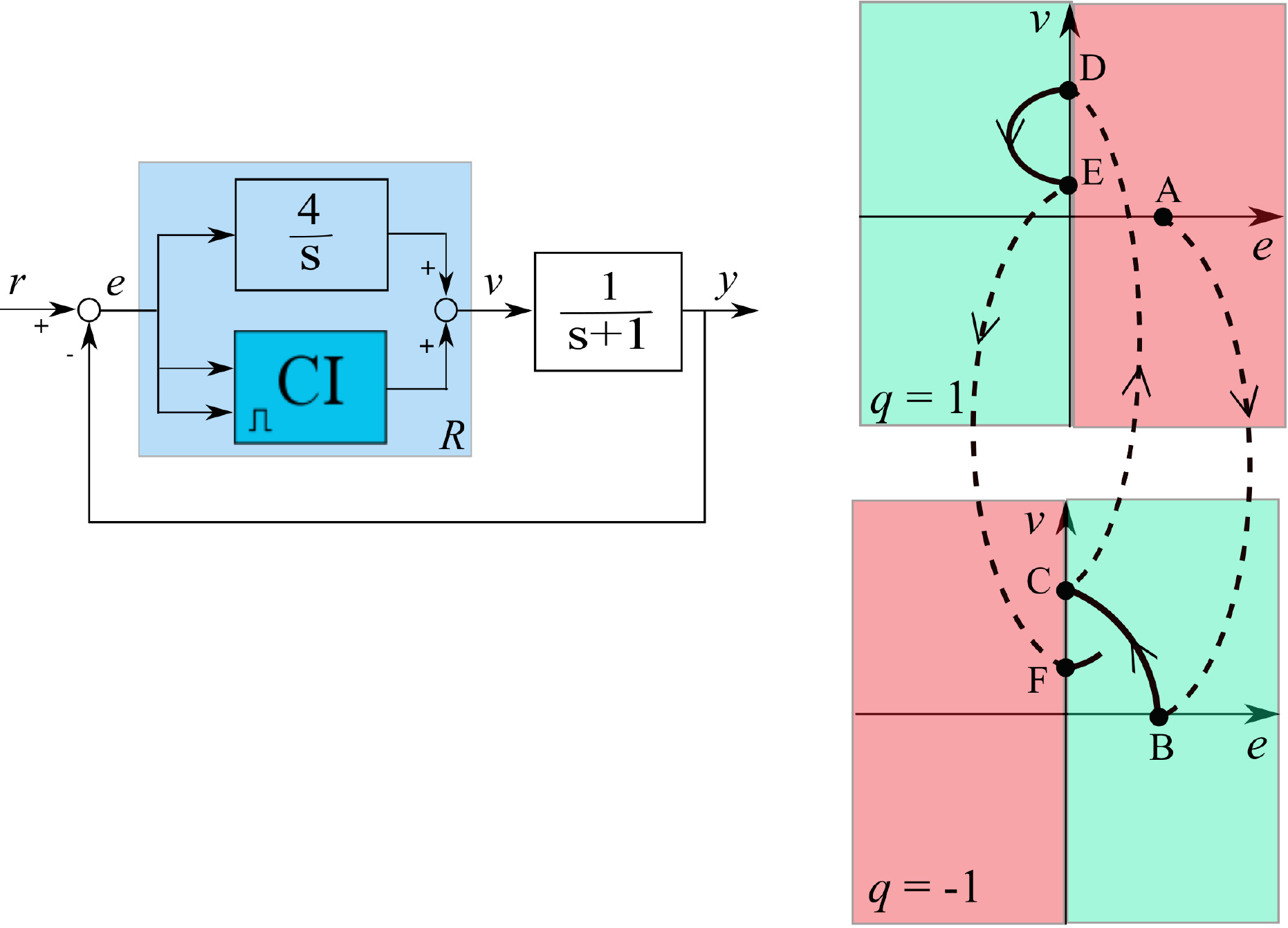}}
\caption{Flow persistence: ({\em left}) Reset control system with a Horowitz reset controller (taken from (\cite{Krishman74}); ({\em right}) Simulation for a unit step reference, with the exosystem, the plant, and the reset controller initially at point  $A = (1,0,0,0,1)$ and $\tau = \tau_m$ (solid lines correspond to flows, an dotted lines to jumps from A to B, from C to D, $\cdots$).} 
\label{fig:ICI}
\end{figure}
For some $\tau_m >0$, the time regularized reset control system is given by \eqref{eq:Hclrho}, with state $(\mathbf{x},q,\tau)$  being $\mathbf{x} = (x_w,x_p,x_{r_1},x_{r_2})$, and
\begin{equation}
\begin{array}{ccc}
 {A}=\left(
\begin{array}[c]{cccc}%
0&0&0&0\\
 0&-1& 1&1\\
 4&-4 &0 &0 \\
1&-1&0&0  %
\end{array}
\right),  &
{A_R}=\left(
\begin{array}[c]{cccc}%
1&0&0&0\\
0&1&0&0\\
0&0&1&0\\
0&0&0&0  %
\end{array}
\right), &
 C=\left(
\begin{array}[c]{cccc}%
1&-1&0&0%
\end{array}
\right), 
 \end{array} \label{eq:AARex}
 \end{equation}
Here the reset controller output is $v = x_{r_1}+ x_{r_2}$ and the error signal is $e = x_w - x_p$. The sets ${\mathcal C}^{cl}$ and ${\mathcal D}^{cl}$ are shown in Fig. \ref{fig:ICI} in the $e$-$v$ planes corresponding to $q =1$ and $q=-1$, as green and red regions, respectively. Note that the flow and jump sets are given by \eqref{eq:Cclrho}-\eqref{eq:Dclrho}, and thus flowing is, in principle, possible in the set ${\mathcal D}^{cl}$. In this example, an initial point $\xi=({\mathbf x}_0,1,\tau_m)$ is considered, and thus the timer allows jumps at the initial intant. 

Consider $\xi$ with  ${\mathbf x}_0 = (1,0,0,0)$.   This corresponds to a unit step reference, and the controller and plant initially at rest. The fact that the reset control system is flow persistent means that there exist a solution that is unbounded in the $t$-direction; this solution is shown in Fig. \ref{fig:ICI} where jumps from A to B, from C to D, $\cdots$ are visible (it is the unique solution for the initial point A).  In this case, the solution is not flowing in the set ${\mathcal D}^{cl}$ since it always flows during a time larger than $\tau_m$ before jumps are enabled. This is true as far as $\tau_m  < \tau_m^\star \approx 1.4416$; in fact, without time regularization, the reset intervals sequence is periodic with fundamental period $\tau_m^\star$ after the second jump. This simple structure of the reset instants sequence is common in low-order reset control systems and in these cases time regularization would not be necessary for most initial points. However, there may exist initial points in which time regularization must be used to remove defective solutions, as it is discussed in the following. 

For an initial point $\xi$ with  ${\mathbf x}_0 = (1,1,1,0)$, corresponding again to a unit step reference but now the controller output is initially $v = 1$ (with the CI initially at rest), it can be easily checked that $\xi \in {\mathcal C}^{cl}_{\tau} \cap {\mathcal D}^{cl}_{\tau}$; moreover, since $A{\mathbf x}_0 = {\mathbf 0}$ and $A_R{\mathbf x}_0 = {\mathbf x}_0$ it directly follows that there exists an infinite number of solutions having one of the following hybrid time domains: $[0,\infty) \times \{0\}$, $[0,t_1)\times \{0\} \cup [t_1,\infty)\times \{1\} $, $[0,t_1]\times \{0\} \cup [t_1,t_2]\times \{1\} \cup [t_2,\infty)\times \{2\}, \cdots$, where $t_1 \in [0,\infty)$ and $t_{j+1} \in [t_j+\tau_m,\infty)$ for $j = 1, 2, \cdots$. That is, there exists an only-flowing solution, and an infinite number of solutions that jumps a finite or infinite number of times. Note that all the solutions are unbounded in the $t$-direction and no Zeno solutions do exist, as far as $\tau_m >0$. Finally, note that any solution $\phi$ satisfies $\phi(t,j) = \xi$ for any $(t,j) \in \text{dom } \phi$; informly speaking, all the solutions produce the same values of controller output and error.

\end{example}

\subsection{Reset controllers with a sector resetting law}
Although this work is focused on reset control systems with a zero crossing resetting law, is it instructive to analyze other resetting laws that has been developed in the literature. The sector resetting law was introducided in \cite{Zaccarian05}, and has been the main approach within the framework of hybrid inclusions, followed and also extended in several works (\cite{Zaccarian11,Tarbouriech11,Zhao17}, $\cdots$), $\cdots$.

A basic reset controller with a sector resetting law, and with a state $\mathbf{x}_r$ and input $e$, is given by 
\begin{equation}R:
\left \{
\begin{array}{ll}
     \dot{\mathbf{x}}_r  
=
A_r {\mathbf x}_r  
+
  B_r e
 &\text{, } ({\mathbf x}_r,e) \in \mathcal{C} \\
{\mathbf x}_r^+ 
 = A_\rho {\mathbf x}_r
&\text{, } ({\mathbf x}_r,e) \in \mathcal{D}
\end{array} 
\right.
\label{eq:ResetControllerS}
\end{equation}
where ${\mathcal C} = \{ ({\mathbf x_r},e) \in \mathds{R}^{n_r+1}: ev \geq 0 \}$ and  ${\mathcal D} = \{ ({\mathbf x_r},e) \in \mathds{R}^{n_r+1}: ev \leq 0 \}$, being $v = C_r {\mathbf x_r}$ the controller output. Thus, the basic jump set ${\mathcal D}$ is a sector in the $e$-$v$ plane, consisting of its second and fourth quadrants. In combination with the plant \eqref{eq:P} and the exosystems \eqref{eq:ExoS}, and also including time-regularization, the resulting reset control system, with a sector resetting law, is given by
\begin{equation} 
\left \{
\begin{array}{lll}
\dot{\tau} = 1,&
  \dot{\mathbf{x}}  
=
  A {\mathbf x}   
&\text{, } ({\mathbf x},\tau) \in {\mathcal C}^{cl}_\tau
\\
\tau^+ = 0, &
  {\mathbf x}^+
   = A_R {\mathbf x}
&\text{, } ({\mathbf x},\tau) \in {\mathcal D}^{cl}_\tau
\end{array} 
\right.
\label{eq:HclrhoS}
\end{equation}
where the closed-loop state is now $(\mathbf{x},\tau) = (\mathbf{x}_w,\mathbf{x}_p,\mathbf{x}_r,\tau)$, and the flow and jump sets are also 
$
{\mathcal C}^{cl}_\tau = \mathcal{C}^{cl}\times [0,\infty) \cup \mathcal{D}^{cl}\times [0,\tau_m]
$
and 
$
{\mathcal D}^{cl}_\tau = \mathcal{D}^{cl} \times [\tau_m,\infty) 
$, respectively. But now the sets $\mathcal{C}^{cl}$ and  $\mathcal{D}^{cl}$ defined by the sector resetting law, are given by
\begin{equation}
\mathcal{C}^{cl} = \{ {\mathbf x} \in \mathds{R}^n: {\mathbf x}^T M {\mathbf x} \geq 0 \}
\end{equation}
and 
\begin{equation}
\mathcal{D}^{cl} = \{ {\mathbf x} \in \mathds{R}^n: {\mathbf x}^T M {\mathbf x} \leq 0 \}
\end{equation}
respectively, where 
\begin{equation}
M = 
\left(
\begin{array}{ccc}
O  & O  & C_{w_1}^T C_r   \\
 O & O  &  -C_p^TC_r \\
 C_{r}^T C_{w_1} & -C_r^TC_p   & O   
\end{array}
\right)
\label{eq:M}
\end{equation}
Note that time-regularization may force solutions to flow in the jump set $\mathcal{D}^{cl}$, or in the $e$-$v$ plane to flow in the sector ${\mathcal D}$. It easily follows that this reset control system is flow persistent and that does not have defective solutions.

\begin{figure}[ht]
\centerline{\includegraphics[width=10cm,keepaspectratio]{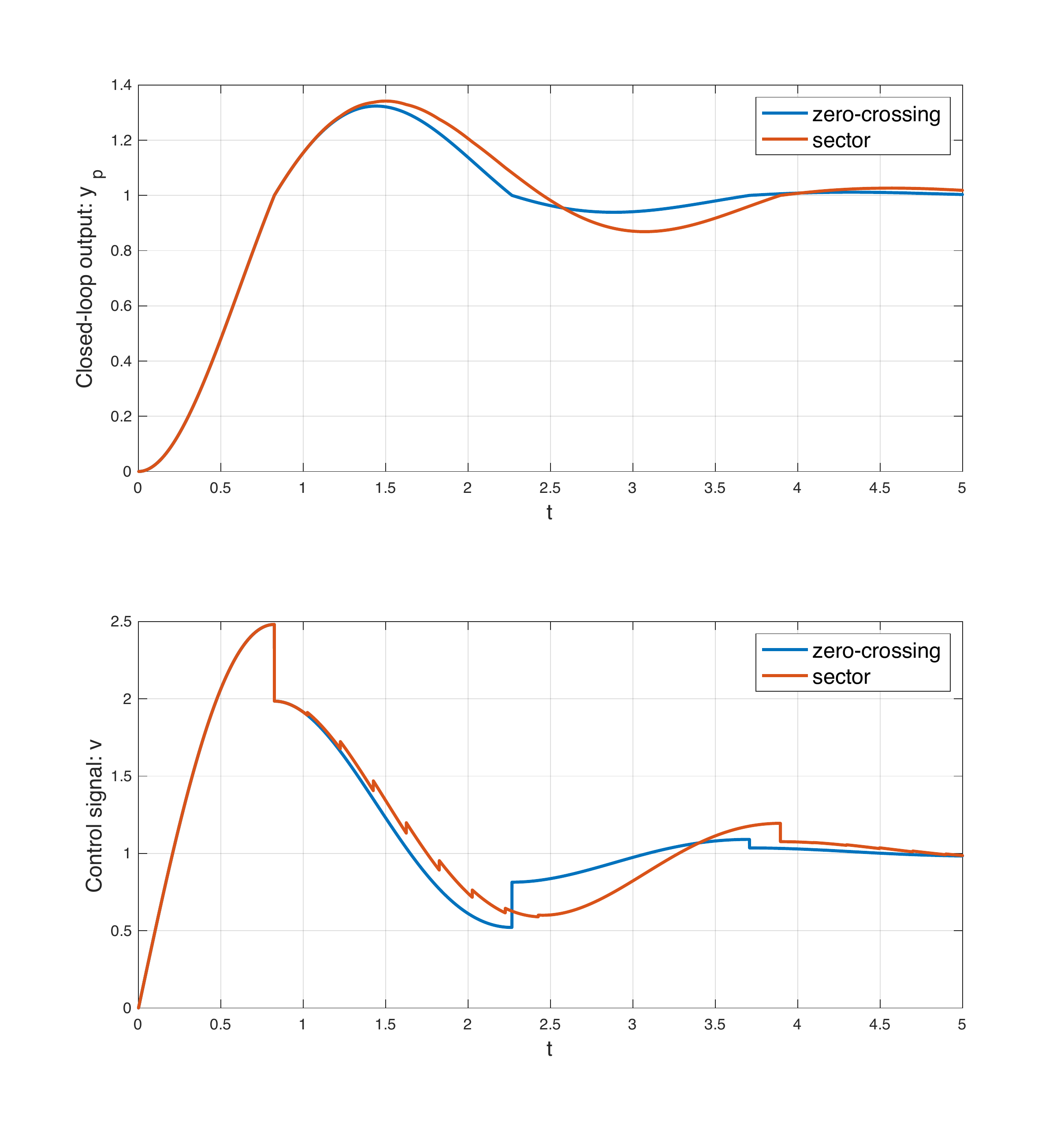}}
\caption{Zero-crossing and sector resetting laws for Example 3.3: ({\em top}) Closed-loop outputs, ({\em botton}) Control signals.}
\label{fig:comparaA}
\end{figure}

\begin{example}
Consider again the reset control system of Fig. \ref{fig:ICI}, with a zero crossing resetting law; and also a reset control system wit a sector resetting law with state $(\mathbf{x},\tau) = (x_w,x_p,x_{r_1},x_{r_2},\tau)$, as given by \eqref{eq:HclrhoS}-\eqref{eq:M}, with $A$ and $A_R$ given by  \eqref{eq:AARex}, and 
\begin{equation}
M = 
\left(
\begin{array}{cccc}
0  & 0  & 1 & 1  \\
0  & 0  & -1&-1 \\
 1 & -1  & 0 & 0 \\
 1 & -1 & 0 & 0 
\end{array}
\right)
\label{eq:MEx}
\end{equation}

Figures \ref{fig:comparaA}-\ref{fig:comparaB} show a simulation of both resetting laws.  Fig. \ref{fig:comparaA} shows the step responses, including closed-loop outputs and control signals. Note that there is an important difference in how both resetting laws performs jumps, specially in the case in which $e < 0$ and $v>0$ and a jump is enabled. In this case, which corresponds for example to the first jump in Fig. \ref{fig:comparaA},  the zero crossing resetting law performs a jump to its flow set and then the solution flows until the next jump at $t \approx 2.25$, while the sector resetting law performs a jump to its jump set. This produces a chattering behavior of the sector resetting law, and in fact the obtained solution would be defective if time-regularization would not have been used. On the other hand, strictly speaking, time-regularization is not necessary for this solution of the zero crossing resetting law, and the same solution is obtained with or without time-regularization (as far as $\tau_m< \tau_m^\star$ -see Example 3.3). The control signals and its component are best analyzed in Fig. \ref{fig:comparaB}: note that after the first jump, in contrast with the zero crossing resetting law, the sector resetting law  periodically reset its $x_{r_2}$ state (with a fundamental period $\tau_m = 0.2$) until $t \approx 2.45$. This is the cause of its chattering, and of its bigger overshoot and undershoot in the step response. The undershoot is specially worst due to the fact that when the error signal changes its sign at $t \approx 2.45$, $x_{r_2} \approx 0$ due to a recent reset.
This example shows that the response of both resetting laws may be very different in general, and although in this case it is clear that the response of the zero-crossing resetting law is qualitatively better in terms of tracking error and control signal chattering, the situation may be different in another cases.  
\end{example}

\begin{figure}[ht]
\centerline{\includegraphics[width=10cm,keepaspectratio]{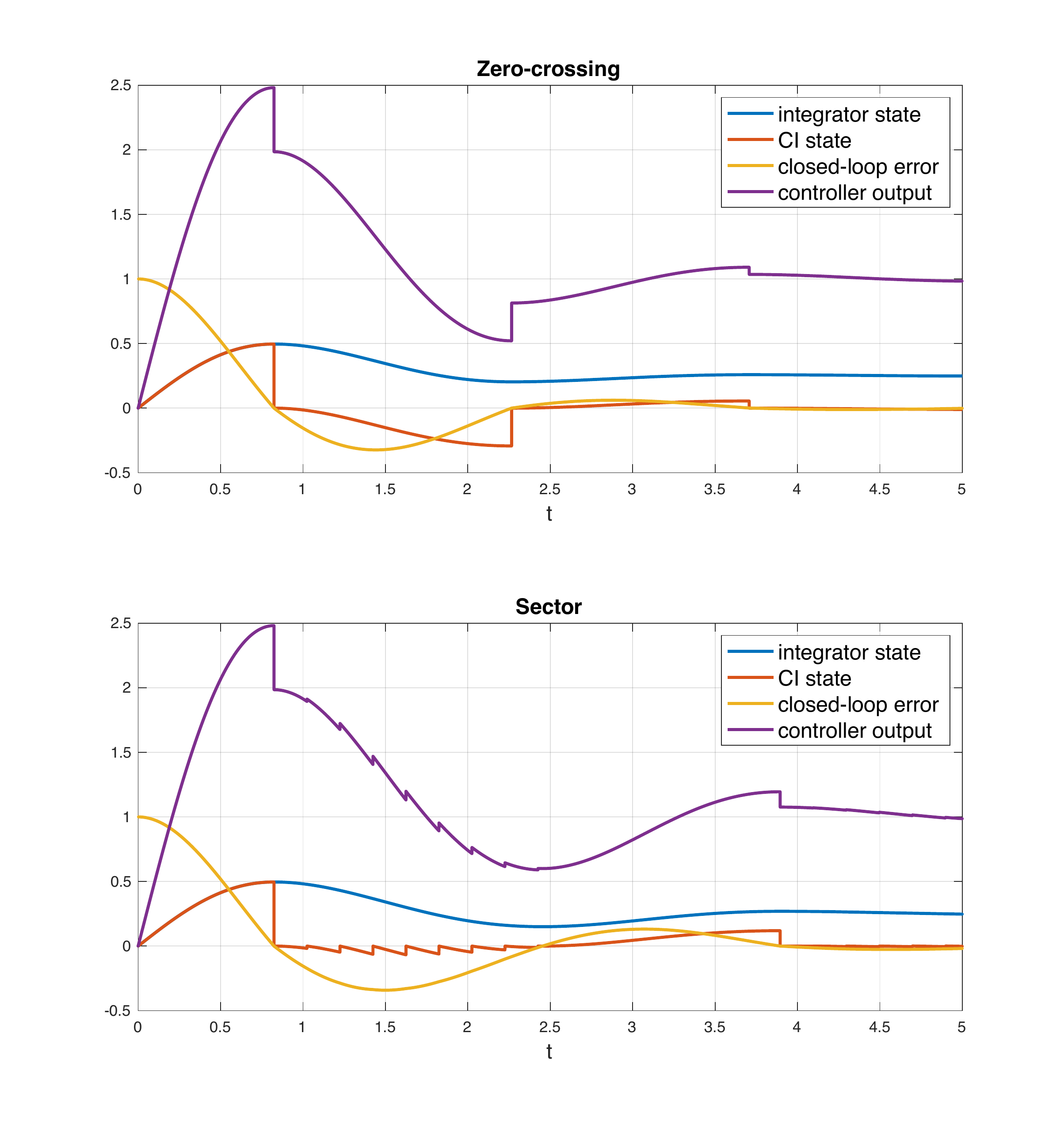}}
\caption{Zero-crossing and sector resetting laws for Example 3.3.}
\label{fig:comparaB}
\end{figure}

\section{Stability analysis}
The stability of the time-regularized reset control system ${\mathcal H}^{cl}_\tau$ with no exogenous inputs 
is analyzed in the following. Reset times-dependent stability criteria will be developed, inspired from previous results developed in \cite{BBbook12,Banos11}. The basic idea of this approach is to analyze stability by using a discrete-time system (a Poincar\'e-like map),  that represents the sampling of ${\mathcal H}^{cl}_\tau$ at the after-reset instants.

As it is usual in hybrid dynamical systems, stability is referred to sets instead of a single point. The following stability definitions are based on \cite{Goebel09}, note that they are applicable to continuous-time or discrete-time systems as particular cases of hybrid systems. 
Consider a generic hybrid system ${\cal H}$ on ${\mathds R}^n$. For a set ${\cal A} \subset {\mathds R}^n$ and a vector $\phi \in  {\mathds R}^n$, the notation $\| \phi \|_{{\cal A}} = \min \{\| \phi - \psi\|: \psi \in {\cal A}\}$ indicates the distance of $\phi$ to ${\cal A}$. The set  ${\cal A}$ is {\em stable} for ${\mathcal H}$ if for each $\varepsilon >0$ there exist $\delta > 0$ such that $\| \phi(0,0) \|_{\cal A} \leq \delta$ implies $\| \phi(t,j) \|_{\cal A} \leq \varepsilon$ for all solutions $\phi$ to ${\mathcal H}$ and all $(t,j) \in \text{dom }x$. The set ${\cal A}$ is {\em attractive} if there exists a ball ${\mathds B} \subset \mathds{R}^n$ centered at the origin such that 
for any $\xi \in {\mathds B}$ solutions $\phi$ to  ${\mathcal H}^{cl}$ with $\phi(0,0)=\xi$ converge to a set ${\cal A}$, that is  $\| \phi(t,j) \|_{\cal A} \rightarrow 0$ as $t+j \rightarrow \infty$, where $(t,j) \in  \text{dom }\phi$. The set  ${\cal A}$ is {\em asymptotically stable} if it is stable and attractive, and its {\em basin of attraction} is ${\mathds B}$. If the basin of attraction is ${\mathds{R}^n}$ then ${\cal A}$ is {\em globally asymptotically stable}. In the case of the reset control ${\mathcal H}^{cl}_\tau$, stability will be referred to the set ${\cal A}_{\mathbf{0}}=\{\mathbf{0}\}\times \{1,-1\}\times [0,\infty)$. 
\vspace{0.5cm}

For $\xi \in \mathds{R}^n \times \{1,-1\} \times \{0\}$
, and a solution $\phi = (\mathbf{x},q,\tau)$ to $\mathcal{H}_\tau^\text{cl}$ with $\phi(0,0) = \xi$, and with $\text{dom } \phi = [0,t_1]\times \{0\} \cup [t_1,t_2]\times \{1\} \cup \cdots$, the reset intervals sequence $\tau_\phi$ is defined as ${\tau}_\phi = (\tau_{1},\tau_{2}, \tau_{3}, \cdots)$, where $\tau_j=\tau(t_j,j-1)$ for $j = 1,2, \cdots$, which corresponds to the values of the timer $\tau$ just before jumps are performed. 
Assume, by simplicity, that the timer $\tau$ in ${\mathcal H}^{cl}_\tau$ is initially at rest and that $q(0,0) = 1$, otherwise jumps may always be performed at the initial instant to prepare the system for that state.  Define by simplicity of notation the matrix $J$ as 
\begin{equation}
J =
\left(
\begin{array}{c}
  I_{(n-n_\rho)\times (n-n_\rho)} \\
  O_{n_\rho \times (n-n_\rho)} 
\end{array}
\right)
\end{equation}
such as for any $\mathbf{z} \in \mathds{R}^{n-n_{{\rho}}}$, $J\mathbf{z} =  \left(
\begin{array}{c}
    \mathbf{z}     \\
  \mathbf{0}   
\end{array}
\right) \in \mathds{R}^{n}$. It easily follows that $A_R = JJ^\top$, $J = A_RJ$ and $J^\top = J^\top A_R$. Now, the mapping $I:\mathds{R}^{n-n_{{\rho}}} \rightarrow \mathds{R}_{\geq 0}$, is defined as
\begin{equation}
 I(\mathbf{z}) =  \min \{\tau \geq \tau_m: Ce^{A\tau}
J\mathbf{z}    \geq 0 \}
   \label{eq:I}
\end{equation}
Moreover, the following standing assumption will operate in the rest of this work.  It is assumed that there also exist an upper bound of $I$ in the cases in which the flow dynamics is unstable. Note that, otherwise, jumps would be unable to stabilize the flow dynamics. 


\vspace{0.5cm}

\noindent {\em Assumption} $A$: 
\begin{itemize}
\item 
The reset controller states to be reset and the timer are initially at rest, that is $\mathbf{x}(0,0) = J\mathbf{z}_0$ for some $\mathbf{z}_0 \in \mathds{R}^{n-n_{{\rho}}}$, and $\tau(0,0) = 0$. In addition, $ q(0,0) = 1$.

\item Either the state matrix $A$ in \eqref{eq:Hclrho} is Hurwitz or the mapping $I$ is upper bounded (that is there exists $\Delta_M>0$ such as $I(\mathbf{z}) \leq \Delta_M$ for any $\mathbf{z} \in {\mathds R}^{n- n_{{\rho}}} $).
\end{itemize}

Here, a a Poincar\'e-{\em like} map, that gives the evolution from one after-jump state to the next one with a sign change, is postulated.
By definition, the discrete-time system ${\cal DH}_\tau^{cl}$, with state $\mathbf{z} \in \mathds{R}^{n_p+n_{\hat{\rho}}}$,  is  given by 
\begin{equation}
{\mathbf z}^+ = g(\mathbf{z}) = - J^T e^{A \cdot I(\mathbf{z})}  
    J\mathbf{z}    
\label{eq:PM}
\end{equation}
Note that sign change allows, starting at $q=1$, to obtain the successive reset intervals by using the mapping $I$, wich does not explicitly depends on $q$. Also to obtain a simplified dynamic discrete system ${\cal DH}_\tau^{cl}$ in which the state $\mathbf{z}$ only consists of the first $n-n_\rho$ values of $\mathbf{x} = J \mathbf{z}$, and that will be perfectly valid to analyze the stability of the original hybrid control system as it will be seen in the following.  Some homogeneity properties of the maps $I$ and $g$ easily follow: for any $\lambda >0$ it is true that
\begin{equation}
\begin{array}{ll}
(i) & I(\lambda \mathbf{z}) = I(\mathbf{z})    \\
 (ii) & g(\lambda \mathbf{z}) = \lambda  g(\mathbf{z})     \\
  &     
\end{array}
\label{eq:propHom}
\end{equation}


\begin{proposition}  The set ${\cal A}_{\mathbf{0}}$ is (globally) asymptotically stable for the reset control system ${\mathcal H}^{cl}_\tau$ 
 if and only if the origin $\{\mathbf{0}\}$ is (globally) asymptotically stable for the discrete-time system ${\cal DH}_\tau^{cl}$. 
\end{proposition}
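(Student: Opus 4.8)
The plan is to use the fact that ${\cal DH}_\tau^{cl}$ is exactly the sampling of ${\mathcal H}^{cl}_\tau$ at the after-jump instants, so that stability of the hybrid set $\mathcal{A}_{\mathbf 0}$ and of the discrete origin are two descriptions of the \emph{same} trajectories. First I would fix $\xi = (J\mathbf{z}_0,1,0)$, as permitted by Assumption $A$, and follow the maximal solution $\phi = (\mathbf{x},q,\tau)$ that jumps whenever a jump is enabled, with jump times $0=t_0 < t_1 < t_2 < \cdots$. Using $A_R = JJ^\top$, $J = A_RJ$ and $J^\top J = I$, I would prove by induction that the after-jump data satisfy $\mathbf{x}(t_j,j) = (-1)^j J\mathbf{z}_j$ and $q(t_j,j) = (-1)^j$, where $\mathbf{z}_j = g^j(\mathbf{z}_0)$ and the reset intervals are $\tau_{j+1} = I(\mathbf{z}_j)$. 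The crucial check is that the jump condition $q\,Ce^{A\tau}\mathbf{x} \geq 0$ evaluated at $q=(-1)^j$ and $\mathbf{x} = (-1)^j J\mathbf{z}_j$ collapses precisely to $Ce^{A\tau}J\mathbf{z}_j \geq 0$, which is the condition defining $I(\mathbf{z}_j)$; this is exactly what the sign change in $g$ is designed to absorb. Since $J^\top J = I$ this yields $\|\mathbf{x}(t_j,j)\| = \|\mathbf{z}_j\|$. Finally, because $\mathcal{A}_{\mathbf 0} = \{\mathbf 0\}\times\{1,-1\}\times[0,\infty)$, the distance $\|\phi(t,j)\|_{\mathcal{A}_{\mathbf 0}}$ is attained by matching $q$ and $\tau$ exactly, so $\|\phi(t,j)\|_{\mathcal{A}_{\mathbf 0}} = \|\mathbf{x}(t,j)\|$; every stability statement for $\mathcal{A}_{\mathbf 0}$ therefore becomes a bound on $\|\mathbf{x}(t,j)\|$.

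For the $(\Leftarrow)$ direction I would control the intersample growth of $\|\mathbf{x}\|$ uniformly. On the $(j+1)$-th flow interval, $\mathbf{x}(t_j+s,j) = e^{As}\mathbf{x}(t_j,j)$ for $s \in [0,\tau_{j+1}]$, so $\|\mathbf{x}(t_j+s,j)\| \leq K\|\mathbf{z}_j\|$, where $K := \sup_{0 \leq s \leq \Delta_M}\|e^{As}\|$ when $I$ is bounded, and $K := \sup_{s \geq 0}\|e^{As}\| < \infty$ when $A$ is Hurwitz, which is exactly the dichotomy of Assumption $A$. Hence $\sup_{(t,j)}\|\mathbf{x}(t,j)\| \leq K\,\sup_j\|\mathbf{z}_j\|$, and discrete stability of $\{\mathbf 0\}$ (which bounds $\sup_j\|\mathbf{z}_j\|$ in terms of $\|\mathbf{z}_0\| = \|\mathbf{x}(0,0)\|$) yields stability of $\mathcal{A}_{\mathbf 0}$. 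For attractivity, discrete convergence $\mathbf{z}_j \to 0$ together with the same intersample bound gives $\mathbf{x}(t,j)\to 0$ whenever the solution jumps infinitely often, so that $j \to \infty$; if instead the solution jumps only finitely often and flows forever, then $I$ cannot be bounded, hence by Assumption $A$ the matrix $A$ is Hurwitz and $e^{As}\mathbf{x}(t_{\text{last}},j_{\text{last}}) \to 0$ directly. This disposes of all maximal solutions, and running the same argument over the whole state space, with the homogeneity \eqref{eq:propHom}, covers the global case.

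For the $(\Rightarrow)$ direction I would note that every discrete trajectory $\{\mathbf{z}_j\}_{j \geq 0}$ of ${\cal DH}_\tau^{cl}$ is realized as the after-jump sampling of the particular hybrid solution constructed above; under Assumption $A$ this solution jumps infinitely often when $I$ is bounded, so the entire sequence is produced. Since asymptotic stability of $\mathcal{A}_{\mathbf 0}$ holds for \emph{all} solutions of ${\mathcal H}^{cl}_\tau$, it holds for this one, and evaluating the identities $\|\phi(t,j)\|_{\mathcal{A}_{\mathbf 0}} = \|\mathbf{x}(t,j)\|$ and $\|\mathbf{x}(t_j,j)\| = \|\mathbf{z}_j\|$ at the after-jump instants $(t_j,j)$ transfers stability and attractivity to the discrete origin immediately.

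The main obstacle is the $(\Leftarrow)$ direction, and specifically the uniform control of the intersample overshoot: a single flow interval can a priori amplify the state by $\|e^{A\tau_{j+1}}\|$, so without Assumption $A$ the samples $\mathbf{z}_j$ could remain small while $\mathbf{x}$ peaks arbitrarily high between resets. Handling this cleanly, together with the case of solutions that eventually stop jumping (where attractivity must be supplied by the flow dynamics rather than by the discrete map), is where the dichotomy in Assumption $A$ is indispensable and where the argument requires the most care.
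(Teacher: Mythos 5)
Your proposal is correct and follows essentially the same route as the paper's proof: identify the after-jump samples of the hybrid solution with the trajectories of ${\cal DH}_\tau^{cl}$ (including the sign alternation $\mathbf{x}(t_j,j)=(-1)^jJ\mathbf{z}_j$, which the paper states as its equation \eqref{eq:secphi}), transfer stability at the sample instants for the necessity direction, and bound the intersample growth via Assumption $A$'s dichotomy for sufficiency. Your only departures are cosmetic refinements: you bound the flow by $K=\sup\|e^{As}\|$ (over $[0,\Delta_M]$ or $[0,\infty)$) where the paper invokes Gronwall with exponent $\alpha \geq \|A\|$, and you explicitly dispose of maximal solutions that eventually stop jumping, a case the paper leaves implicit.
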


\begin{proof} It is an adaptation of \cite{Banos11}-Prop. 3.1 to the hybrid formalism adopted in this work. According to Assumption A, and from an initial state $({\mathbf x}_0,q_0,\tau_0) \in \mathds{R}^n \times \{1,-1\} \times \mathds{R}$, the reset control system ${\mathcal H}^{cl}_\tau$ is prepared (forcing jumps if necessary) to be in a state  $\xi = (J{\mathbf z}_0,1,0)=(({\mathbf z}_0,{\mathbf 0}),1,0)$, with ${\mathbf z}_0 \in {\mathds R}^{n_p+n_{{\hat \rho}}}$, that it is redefined to be the initial state. Also, consider a solution $\phi = ({\mathbf x},q,\tau) $ to ${\mathcal H}^{cl}_\tau$ with $\phi(0,0) =\xi$. 
 
\vspace{0.25cm}
({\em only if}) From the definition of ${\mathcal H}^{cl}_\tau$ and its jump set, it follows that the values of $\phi$ at the after-reset instants (including the initial value) are given by 
\begin{equation}
\begin{array}{l}
 \phi(0,0) = (J\mathbf{z}_0,1,0)\\
 \phi(t_1,1) = (-J\mathbf{z}(1),-1,0) \\
 \phi(t_2,2) = (J\mathbf{z}(2),1,0)    \\
 \cdots   
\end{array}
\label{eq:secphi}
\end{equation}
If $\{ \mathbf{ 0}\}$ is not stable for ${\cal DH}_\tau^{cl}$ then there exists an $\varepsilon >0$ such that $\|\phi(t_j,j)\|_{\mathcal{A}_{\mathbf{0}}} =  \| \mathbf{z}(j) \| > \varepsilon$ for some $(t_j,j) \in  \text{dom} \phi$. 
As a result, ${\mathcal A}_{\mathbf{0}}$ is not stable for ${\mathcal H}^{cl}_\tau$. On the other hand, if $\{ \mathbf{ 0}\}$  is not atractive for ${\cal DH}_\tau^{cl}$  then there will exist a sequence of values $\| \phi(t_j,j) \|_{\mathcal{A}_{\mathbf{0}}}  = \| \mathbf{z}(j) \|$, for $j = 1, 2, \cdots$ which does not converge to zero, and thus ${\mathcal A} $ will not be atractive for ${\mathcal H}^{cl}_\tau$.


\vspace{0.25cm}
({\em if}) From the flow equation in \eqref{eq:Hclrho} it follows that for $(t,j) \in \text{dom }\phi$ 
\begin{equation}
{\mathbf x}(t,j) = {\mathbf x}(t_j,j) + \int_{t_j}^t A{\mathbf x}(s,j)ds 
\end{equation}
and applying the Gronwall inequality (observing that the induced norm $\|A\|$ is always bounded by some real number $\alpha$ and that $\|J\mathbf{z} \| = \|\mathbf{z}\|$) it directly follows that 
\begin{equation}
\|{\mathbf x}(t,j) \| \leq \|{\mathbf x}(t_j,j)\|e^{\alpha(t-t_j)} = \|(-1)^{j+1}J{\mathbf z}(j)\|e^{\alpha(t-t_j)} = \| \mathbf{z}(j)\|e^{\alpha(t-t_j)} 
\end{equation}
and thus
\begin{equation}
\|{\phi}(t,j) \|_{\mathcal{A}_{\mathbf{0}}}  \leq \|\mathbf{z}(j)\|e^{\alpha(t-t_j)} 
\label{eq:xnorm}
\end{equation}

 Since $\{\mathbf{0}\}$ is stable for ${\cal DH}_\tau^{cl}$ it is true that there exists $\gamma >0$ such as $\|\mathbf{z}(j)\| \leq \gamma \|\mathbf{z}_0\| = \gamma \|\xi\|_{\mathcal{A}_{\mathbf{0}}} $ and thus \eqref{eq:xnorm} results in 
\begin{equation}
\|{\phi}(t,j) \|_{\mathcal{A}_{\mathbf{0}}}\leq \gamma e^{\alpha(t-t_j)} \|\xi\|_{\mathcal{A}_{\mathbf{0}}}
\end{equation}
Now, from assumption $A$ it follows that either $\alpha < 0$ or $t-t_j < \Delta_M$ and thus stability of the set ${\mathcal{A}_{\mathbf{0}}}$ for ${\mathcal H}^{cl}_\tau$ directly follows. Asymptotic stability is obtained from the fact that for the discrete-time system ${\cal DH}_\tau^{cl}$ it is true that $\| \mathbf{z}(j)\| < \gamma \lambda^j\| \mathbf{z}_0\|$, for $j = 1,2, \cdots$, and for any  $ \mathbf{z}_0 \in {\mathds B}$ and $0 \leq \lambda < 1$ (here ${\mathds B} \subset {\mathds R}^{n_p+n_{{\hat \rho}}}$ is a ball centered at the origin). Substituting in \eqref{eq:xnorm} and using again the fact that $ \|\xi\|_{\mathcal{A}_{\mathbf{0}}} =\|\mathbf{z}_0\| $,  it results that
\begin{equation}
\|{\phi}(t,j) \|_{\mathcal{A}_{\mathbf{0}}} \leq \gamma \lambda^je^{\alpha(t-t_j)} \|\xi\|_{\mathcal{A}_{\mathbf{0}}} 
\end{equation}
and thus, since either $\alpha <0$ or $t-t_j < \Delta_M$, it directly follows that $\| \phi(t,j) \|_{\mathcal{A}_{\mathbf{0}}}  \rightarrow 0$ as $t+j \rightarrow \infty$, where $(t,j) \in  \text{dom }\phi$, and $\xi \in ({\mathds B} \times \{ \mathbf{0} \})\times \{1\} \times \{0\}$. 
Finally, since the initial condition $\xi$ is prepared from any $({\mathbf x}_0,q_0,\tau_0) \in \mathds{R}^n \times \{1,-1\} \times \mathds{R}$, global asymptotic stability of ${\mathcal{A}_{\mathbf{0}}}$ for ${\mathcal H}^{cl}_\tau$ comes from the global asymptotic stability of $\{\mathbf{0}\}$ for ${\cal DH}_\tau^{cl}$.
$\Box$
\end{proof}

\vspace{0.25cm}


\subsection{Stability based on periods of the reset interval sequences}
It is well known  that in some relevant cases in control practice, reset intervals sequences 
have a particularly simple structure (\cite{BBbook12}).  For example, for $n_r = n_\rho$ (full reset) and a second-order plant, that is $n_p =2$, reset intervals sequences are periodic with a constant fundamental period $\Delta$, after the second reset (this is also the case of Example 3.3 with no exogenous inputs as far as $\tau_m$ is small enough). Note that in these cases, stability may be simply checked by analyzing if $A_R e^{A \Delta}$ is a Schur matrix. 

In the following, it will be investigated how periodic patterns related with the dicrete time system ${\cal DH}_\tau^{cl}$ may be used to obtain stability criteria for the reset control system ${\mathcal H}^{cl}_\tau$. 
Regarding ${\cal DH}_\tau^{cl}$, consider the equivalence relation $\equiv$ : for $\mathbf{z}_1$,$\mathbf{z}_2 \in \mathds{R}^{n-n_\rho}\setminus{\mathbf{0}}$, $\mathbf{z}_1 \equiv \mathbf{z}_2$ if there exists $\lambda >0$ such as  $\mathbf{z}_1= \lambda \mathbf{z}_2$. Each equivalence class can be represented by a unit vector $\mathbf{s}$ in the unit $(n-n_\rho-1)$-sphere ${S}^{n-n_\rho-1}$. Now, it is defined the {\em angle mapping} 
$\varPi_g:{S}^{n-n_\rho-1} \rightarrow {S}^{n-n_\rho-1}$ as
\begin{equation}
{\mathbf s}^+ = \varPi_g({\mathbf s}) =  \frac
{
g({\mathbf s})}
{\|g({\mathbf s})\|}
\label{eq:PIg}
\end{equation}

Note that $\mathbf{s}$ can be interpreted as the projection of $\mathbf{z} \in \mathds{R}^{n-n_\rho}$ on 
${S}^{n-n_\rho-1} $, and thus the mapping $\varPi_g$ will produce orbits of those projections. A natural form of analyzing periodic interval sequences of the reset control system ${\mathcal H}^{cl}_\tau$ is by analyzing periodic points of $\varPi_g$. These points will define the periodic structure of the reset intervals sequences allowing to develop an asymptotic stability criterion. 

Some definitions about stability of periodic points follows (see, for example, \cite{Alligood00} and \cite {Luo12} for technical details). Firstly, $\varPi_g^k(\mathbf{s})$ is defined to be the result of applying $k$ times $\varPi_g$ to the point $\mathbf{s}$. The {\em orbit} of $\mathbf{s}$ under $\varPi_g$ is the set of points $\{\mathbf{s} =\varPi_g^0(\mathbf{s}), \varPi_g(\mathbf{s}), \varPi_g^2(\mathbf{s}), \cdots, \}$. ${\mathbf p}$ is a {\em periodic-k point} if $\varPi_g^k({\mathbf p}) = {\mathbf p}$ and if $k$ is the smaller such positive integer; and the orbit of ${\mathbf p}$ with $k$ points, that is $\{\mathbf{p}, \varPi_g(\mathbf{p}), \cdots, \varPi_g^k(\mathbf{p})\}$, is called a {\em periodic-k orbit}. For $k = 1$, ${\mathbf p}$ is referred to as a {\em fixed point}. Assume that $\varPi_g$ is differentiable in a neighborhood $U$ of a fixed point ${\mathbf p}$ and let ${\mathbf D}\varPi_g({\mathbf p})$ be the Jacobian matrix of $G$ at ${\mathbf p}$; the fixed point ${\mathbf p}$ is called a {\em sink} if ${\mathbf D}\varPi_g({\mathbf p})$ is a Schur matrix, and a {\em source} is all eigenvalues of ${\mathbf D}\varPi_g({\mathbf p})$ has a magnitude greater than 1. The stable manifold of ${\mathbf p}$, denoted as ${\cal S}({\mathbf p})$, is the set of points ${\mathbf s} \in {S}^{n-n_\rho-1} $ such that $\|\varPi_g^k(\mathbf{s}) - {\mathbf p} \| \rightarrow 0$ as $k \rightarrow \infty$. Analogously, for a periodic-$k$ point ${\mathbf p}$, its periodic-$k$ orbit is a {\em sink} ({\em source}) if ${\mathbf p}$ is a sink (source) for the map $\varPi_g^k$.



\begin{proposition} 
Assume that the angle map $\varPi_g$ has a periodic-k point ${\mathbf p}$, being $\varPi_g^k$ differentiable in a neighborhood $U$ of $\mathbf{p}$, and that its periodic-$k$ orbit is a sink with stable manifold $ {\mathcal S}(\mathbf{p})={S}^{n-n_\rho-1} $. 
Then, ${\mathbf p}$ is an eigenvector of the matrix 
 \begin{equation}
	M_{\mathbf{p}} := J^\top e^{A \cdot I(\varPi_g^{k-1}({\mathbf p}))}\cdots A_Re^{A\cdot I(\varPi_g({\mathbf p}))}A_Re^{A\cdot I({\mathbf p})} J
\label{eq:Mp}
\end{equation}
corresponding to a real eigenvalue $\lambda_{\mathbf p}$, and the 
set ${\mathcal A}_{\mathbf{0}} $ is asymptotically stable for the reset control system ${\mathcal H}^{cl}_\tau$ if and only if
$|\lambda_{\mathbf p}| < 1$. Moreover, the basin of atraction of ${\mathcal A}_{\mathbf{0}}$ is $\mathds{R}^n\times\{1,-1\} \times [0,\infty)$ (stability is global).



\end{proposition}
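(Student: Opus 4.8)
The plan is to analyze the discrete-time system ${\cal DH}_\tau^{cl}$ of \eqref{eq:PM} in polar form, separating magnitude from direction, and then to invoke the equivalence between stability of $\mathcal{A}_{\mathbf{0}}$ and stability of $\{\mathbf{0}\}$ established in the preceding proposition. Writing $\mathbf{z} = r\mathbf{s}$ with $r = \|\mathbf{z}\|$ and $\mathbf{s} \in S^{n-n_\rho-1}$, the homogeneity relations \eqref{eq:propHom} give $g(r\mathbf{s}) = r\,g(\mathbf{s})$, so one step of ${\cal DH}_\tau^{cl}$ factors into the angular update $\mathbf{s}^+ = \varPi_g(\mathbf{s})$ together with the radial update $r^+ = \rho(\mathbf{s})\,r$, where $\rho(\mathbf{s}) := \|g(\mathbf{s})\|$. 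Thus after $m$ steps from $\mathbf{z}_0 = r_0\mathbf{s}_0$ one has $\|\mathbf{z}_m\| = r_0\prod_{j=0}^{m-1}\rho(\mathbf{s}_j)$ with $\mathbf{s}_j = \varPi_g^j(\mathbf{s}_0)$, so the whole problem reduces to controlling this product of radial gains along angular orbits.

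First I would prove the eigenvector/eigenvalue claim. Iterating $g$ along the orbit of $\mathbf{p}$ and using $A_R = JJ^\top$, together with the facts (from \eqref{eq:propHom}) that $g^j(\mathbf{p})$ is a positive multiple of $\varPi_g^j(\mathbf{p})$ and that $I$ is invariant under positive scaling, so that $I(g^j(\mathbf{p})) = I(\varPi_g^j(\mathbf{p}))$, a straightforward induction yields
\[
g^k(\mathbf{p}) = (-1)^k\,J^\top e^{A\cdot I(\varPi_g^{k-1}(\mathbf{p}))}A_R\cdots A_R e^{A\cdot I(\mathbf{p})}J\,\mathbf{p} = (-1)^k M_{\mathbf{p}}\,\mathbf{p}.
\]
Since $\mathbf{p}$ is a periodic-$k$ point, $\varPi_g^k(\mathbf{p}) = \mathbf{p}$, and because $\varPi_g^k(\mathbf{p})$ is the normalization of $g^k(\mathbf{p})$ this forces $M_{\mathbf{p}}\mathbf{p} = (-1)^k\|g^k(\mathbf{p})\|\,\mathbf{p}$. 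Hence $\mathbf{p}$ is an eigenvector of $M_{\mathbf{p}}$ with the real eigenvalue $\lambda_{\mathbf{p}} = (-1)^k\|g^k(\mathbf{p})\|$, and in particular $|\lambda_{\mathbf{p}}| = \prod_{j=0}^{k-1}\rho(\varPi_g^j(\mathbf{p}))$ is exactly the radial gain accumulated over one angular period.

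Next, by the preceding proposition, asymptotic stability of $\mathcal{A}_{\mathbf{0}}$ for ${\mathcal H}^{cl}_\tau$ is equivalent to asymptotic stability of $\{\mathbf{0}\}$ for ${\cal DH}_\tau^{cl}$, which by the polar decomposition amounts to $\|\mathbf{z}_m\| \to 0$. Here I would use that the periodic-$k$ orbit is a sink with stable manifold ${\cal S}(\mathbf{p}) = S^{n-n_\rho-1}$: every angular orbit $\{\mathbf{s}_j\}$ converges to it, so the $k$-step gain $R(\mathbf{s}) := \prod_{j=0}^{k-1}\rho(\varPi_g^j(\mathbf{s}))$ sampled along $\varPi_g^{ik}(\mathbf{s}_0)$ tends to $R(\mathbf{p}) = |\lambda_{\mathbf{p}}|$. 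If $|\lambda_{\mathbf{p}}| < 1$, continuity of $R$ near $\mathbf{p}$ makes these factors eventually bounded by a constant $<1$, so $\prod_i R(\varPi_g^{ik}(\mathbf{s}_0))\to 0$ and $\|\mathbf{z}_{mk}\|\to 0$; intermediate steps are controlled because $\rho$ is bounded on the sphere (under Assumption $A$ one has $\tau_m \le I \le \Delta_M$, so $e^{A\cdot I(\mathbf{s})}$ is uniformly bounded), which also yields a uniform bound on the partial products, hence Lyapunov stability. Conversely, if $|\lambda_{\mathbf{p}}|\ge 1$, the choice $\mathbf{s}_0 = \mathbf{p}$ gives $\|\mathbf{z}_{mk}\| = |\lambda_{\mathbf{p}}|^m\|\mathbf{z}_0\|$, which does not converge to $0$, so $\{\mathbf{0}\}$ is not attractive. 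Homogeneity renders the attraction scale-invariant, so the basin is all of $\mathds{R}^{n-n_\rho}$, and through the same equivalence the basin of $\mathcal{A}_{\mathbf{0}}$ is $\mathds{R}^n\times\{1,-1\}\times[0,\infty)$.

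The main obstacle I expect is the limiting radial argument of the third step: converting ``every angular orbit converges to the sink'' into a clean statement that the asymptotic radial growth per step is $|\lambda_{\mathbf{p}}|^{1/k}$, uniformly enough to deliver attractivity and Lyapunov stability simultaneously. This rests entirely on the sink-plus-full-stable-manifold hypothesis (to guarantee convergence of all directions) and on the regularity and boundedness of $\rho = \|g(\cdot)\|$ on $S^{n-n_\rho-1}$; the delicate point is that $I$, defined by the crossing-time minimum in \eqref{eq:I}, need not be globally continuous, so care is needed to justify continuity of $R$ in a neighborhood of the orbit (where the assumed differentiability of $\varPi_g^k$ at $\mathbf{p}$ is available) while only using boundedness of $\rho$ elsewhere.
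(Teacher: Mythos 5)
Your proposal is correct and follows essentially the same route as the paper's own proof: reduction to the discrete-time system ${\cal DH}_\tau^{cl}$ via the preceding equivalence proposition, the homogeneity-based splitting into angular dynamics $\varPi_g$ and radial gains, identification of $\lambda_{\mathbf p}$ as the radial gain accumulated over one angular period, contraction of the gains using convergence of all angular orbits to the sink together with the uniform bound on $e^{A\cdot I(\cdot)}$, and the converse along the invariant ray through $\mathbf p$. The only real difference is presentational: you identify the eigenvector by direct evaluation of the normalization identity $\varPi_g^k(\mathbf{p}) = g^k(\mathbf{p})/\|g^k(\mathbf{p})\|$ at the periodic point (which cleanly handles all $k$ at once), whereas the paper derives $M_{\mathbf p}\mathbf p = -\|M_{\mathbf p}\mathbf p\|\mathbf p$ as a limit along orbits converging to the sink for $k=1$ and sketches the general case; both arguments also share the same delicate points, which you correctly flag, namely continuity of $I$ near the orbit and the uniformity (over initial directions) of the index at which contraction sets in.
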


\begin{proof}
Consider the discrete system ${\cal DH}_\tau^{cl}$ as given by \eqref{eq:PM}, $\mathbf{z}_0 \in  \mathds{R}^{n-n_\rho}$, $\mathbf{z}_k = g^k(\mathbf{z}_0)$, and also $\mathbf{s}_0 =  \frac{\mathbf{z}_0}{\|\mathbf{z}_0\|}$, and $\mathbf{s}_k =  \varPi_g^k({\mathbf s}_0)$,  for $k = 1,2,\cdots$. From the homogeneity property \eqref{eq:propHom}  and \eqref{eq:PIg} it directly follows that 
\begin{equation}
 I(\mathbf{z}_k) =   I(\mathbf{s}_k) 
 \label{eq:Izs} 
\end{equation}
for $k = 0,1,2, \cdots$. The proof is particularized for the case in which  ${\mathbf p}$ is a fixed point of the angle map $\varPi_g$ (that is $k = 1$); for the cases $k = 2, 3, \cdots$, the proof is similar, using $\varPi_g^k$ instead of $\varPi_g$ in the following reasoning. Now, define the matrix functions $M, \delta M:\mathds{R}^{n-n_\rho} \rightarrow  \mathbb R^{(n-n_\rho ) \times (n-n_\rho )}$, such as $M(\mathbf z) = J^\top e^{AI(\mathbf{z})}J$, and thus $M(\mathbf p)  = M_{\mathbf p}$ as given by \eqref{eq:Mp}, and $\delta M({\mathbf z}) = M(\mathbf z)- M_{\mathbf p}$. 

Firstly, it will be shown that ${\mathbf p}$ is an eigenvector of $M_{\mathbf p}$. Since $\mathbf p$ is a sink with stable manifold
the whole sphere, then it is true that 
$\|{\mathbf s}_k - {\mathbf p} \| \rightarrow 0$ as $k \rightarrow \infty$, for any $\mathbf s_0 \in \mathds{S}^{n-n_\rho}$. Moreover, since the mapping $I$ is continuous 
at $\mathbf p$ (otherwise $\Pi_g$ would not be differentiable), and thus the mapping $M$ is also continuous at $\mathbf{p}$, it also follows that $\|I({\mathbf s}_k) - I({\mathbf p}) \| \rightarrow 0$ and finally 
\begin{equation}
\|M({\mathbf s}_k) - M_{\mathbf p} \| \rightarrow 0 
\label{eq:MJsp} 
\end{equation}
as $k \rightarrow \infty$, for any $\mathbf s_0 \in \mathds{S}^{n-n_\rho}$. Then, by directly using \eqref{eq:PM} and \eqref{eq:PIg}, two points ${\mathbf s}_k$ and ${\mathbf s}_{k+1} = \varPi_g({\mathbf s}_{k})$ are related by 
\begin{equation}
{\mathbf s}_{k+1} = -\frac{M({\mathbf s}_k){\mathbf s}_k}{\|M({\mathbf s}_k){\mathbf s}_k\|}
\end{equation}
or equivalently
\begin{equation}
M({\mathbf s}_k){\mathbf s}_k = -\|M({\mathbf s}_k){\mathbf s}_k\|{\mathbf s}_{k+1}
\end{equation}
and finally for $k \rightarrow \infty$ it results that 
\begin{equation}
M_{\mathbf p}{\mathbf p} = -\|M_{\mathbf p}{\mathbf p}\|{\mathbf p}
\end{equation}
that is ${\mathbf p}$ is an eigenvector of $M_{\mathbf p}$ with eigenvalue $\lambda_{\mathbf p} := -\|M_{\mathbf p}{\mathbf p}\|$, a (non-positive) real number.
 
Secondly, consider the (unique) orthogonal decomposition of ${\mathbf z}_k$ as
\begin{equation}
{\mathbf z}_k  =  \alpha_k {\mathbf p} + \alpha_k \delta_k{\mathbf p}^\perp
\end{equation}
where ${\mathbf p}^\perp$ is a vector perpendicular to ${\mathbf p}$, and  $\alpha_k$  and $\delta_k$ are real numbers, for any $k = 1, 2, \cdots$.  Since it is true that $\|{\mathbf s}_k - {\mathbf p} \| \rightarrow 0$ as $k \rightarrow \infty$, then it directly follows that $\delta_k \rightarrow 0 $ as $k \rightarrow \infty$. In addition, consider two points ${\mathbf z}_k$ and ${\mathbf z}_{k+1} = g({\mathbf z}_k)$, for $k = 1, 2, \cdots$. It results that
\begin{equation}
\frac{\|{\mathbf z}_{k+1}\|}{\|{\mathbf z}_{k}\|} = \frac{\|-M({\mathbf z}_{k}){\mathbf z}_{k}\|}{\|{\mathbf z}_{k}\|} = 
 \frac{\|(M_{\mathbf p} + \delta M_k)(\alpha_k {\mathbf p} + \alpha_k \delta_k{\mathbf p}^\perp )\|}{\|\alpha_k {\mathbf p} + \alpha_k \delta_k{\mathbf p}^\perp\|} \rightarrow \frac{\|M_{\mathbf p} {\mathbf p}\| }{\|{\mathbf p} \|} = |\lambda_{\mathbf p} |
\label{eq:zplusz}
\end{equation}
for $k \rightarrow \infty$, where by definition $\delta M_k = \delta M({\mathbf z}_k)$, and it has been used the fact that $\|\delta M_k \| \rightarrow 0$ as $k \rightarrow \infty$ (it easily follows from \eqref{eq:MJsp}).\\

({\em if}) If $|\lambda_{\mathbf p}| < 1$ then from \eqref{eq:zplusz} if follows that for some real number $\lambda$ such as $0 \leq |\lambda_{\mathbf p}| < \lambda < 1$, there exist an integer $N$ such as $\|{\mathbf z}_{k+1}\| = \|M({\mathbf z}_{k}){\mathbf z}_{k}\| < \lambda \|{\mathbf z}_{k}\| $, for $k \geq N$, and this is true for any point ${\mathbf z}_0$. Since, in addition, $\|M({\mathbf z})\| \leq e^{\|A\|\Delta_M}$ it results that
\begin{equation} \small
\|{\mathbf z}_{k}\| \leq \|M({\mathbf z}_{k-1})\| \|M({\mathbf z}_{k-2})\| \cdots \|M({\mathbf z}_{N})\|\|M({\mathbf z}_{N-1})\| \cdots \|M({\mathbf z}_{0})\|\|{\mathbf z}_{0}\| \leq \lambda^{k-N}(e^{\|A\|\Delta_M})^{N}\|{\mathbf z}_{0}\| 
\end{equation}
that is 
\begin{equation} \small
\|{\mathbf z}_{k}\| \leq \gamma \lambda^{k}\|{\mathbf z}_{0}\| 
\end{equation}
for $k \geq N$, where $\gamma = (e^{\|A\|\Delta_M}/\lambda)^{N}$ is a constant. It is also true that $\|{\mathbf z}_{k}\| \leq (e^{\|A\|\Delta_M})^{k}\|{\mathbf z}_{0}\|$, for $k < N$. It easily follows that the origin is globally asymptotically stable for ${\cal DH}_\tau^{cl}$ and Prop. 4.1 certifies global asymptotic stability of  ${\mathcal A}_{\mathbf{0}} $ for the reset control system ${\mathcal H}^{cl}_\tau$.

\vspace{0.25cm}
({\em only if}) Consider an initial condition ${\mathbf z}_{0}\ = \alpha_0 {\mathbf p}$, where $\alpha_0$ is a non-zero arbitrary number. On the other hand, $M(c{\mathbf p}) = M({\mathbf p})$ for any constant $c$. Thus, ${\mathbf z}_1 = - M({\mathbf z}_{0}){\mathbf z}_{0} = - M(\alpha_0 {\mathbf p}) \alpha_0 {\mathbf p} = - \alpha_0 M({\mathbf p}) {\mathbf p} = - \alpha_0 \lambda_{\mathbf p} {\mathbf p}$, and it is obtained that ${\mathbf z}_k =  (-1)^k \alpha_0 \lambda_{\mathbf p}^k {\mathbf p}$ for $k = 1, 2, \cdots$. Stability of ${\mathcal A}_{\mathbf{0}} $ for ${\mathcal H}^{cl}_\tau$ implies stability of the origin for ${\cal DH}_\tau^{cl}$, and this implies that $|\lambda_{\mathbf p}| < 1$.
$\Box$

\end{proof}

Note that, in general, the angle map $\varPi_g$ may exhibit several periodic-$k$ points, not necessarily sinks, with different integer values of $k\geq1$, each periodic point having its own stable manifold. For example, in the case in which the point is a source the stable manifold is the point itself. The following result follows using similar arguments to the proof of Prop. 4.2.  

\begin{corollary}
Assume that the angle map $\varPi_g$ has a finite number of periodic points $\mathbf{p}_1,\mathbf{p}_2, \cdots,\mathbf{p}_n$, with periods $k_1, k_2, \cdots,k_n$, respectively, and that
\begin{equation}
\bigcup_{i=1}^n \mathcal{S}(\mathbf{p}_i) = {S}^{n-n_\rho-1}. 
\end{equation}
Then, for each $i = 1, 2,\cdots, n$, the point ${\mathbf p}_i$ is an eigenvector of the matrix   
\begin{equation}
M_{\mathbf{p}_i} = A_Re^{A \cdot I(\varPi_g^{k_i-1}({\mathbf p}_i))}\cdots A_Re^{A\cdot I(\varPi_g({\mathbf p}_i))}A_Re^{A\cdot I({\mathbf p}_i)}
\end{equation}
corresponding to a real eigenvalue $\lambda_{{\mathbf p}_i}$, 
and the set ${\mathcal A}_{\mathbf{0}} $ is asymptotically stable for the reset control system ${\mathcal H}^{cl}_\tau$ if and only if
$|\lambda_{{\mathbf p}_i}|<1$ for any $i = 1, 2, \cdots,n$. Moreover, the basin of atraction of ${\mathcal A}_{\mathbf{0}}$ is $\mathds{R}^n\times\{1,-1\} \times [0,\infty)$ (stability is global).
\end{corollary}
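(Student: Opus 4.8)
The plan is to reduce the statement, via Proposition 4.1, to the global asymptotic stability of the origin $\{\mathbf{0}\}$ for the homogeneous discrete-time system ${\cal DH}_\tau^{cl}$ of \eqref{eq:PM}, and then to analyse the latter by routing every initial direction to one of the periodic orbits through the covering hypothesis $\bigcup_{i=1}^n \mathcal{S}(\mathbf{p}_i)={S}^{n-n_\rho-1}$. The whole argument repeats, orbit by orbit, the reasoning of Proposition 4.2. First I would settle the eigenvector/eigenvalue claim. Writing $M(\mathbf{z})=J^\top e^{AI(\mathbf{z})}J$ and using $A_R=JJ^\top$, the matrix $M_{\mathbf{p}_i}$ of the statement reduces to the ordered product $M(\varPi_g^{k_i-1}(\mathbf{p}_i))\cdots M(\varPi_g(\mathbf{p}_i))M(\mathbf{p}_i)$ acting on $\mathds{R}^{n-n_\rho}$, exactly the period-$k_i$ analogue of \eqref{eq:Mp}. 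Applying the first half of the proof of Proposition 4.2 to the return map $\varPi_g^{k_i}$ (for which $\mathbf{p}_i$ is a fixed point), together with homogeneity \eqref{eq:propHom}, yields $M_{\mathbf{p}_i}\mathbf{p}_i=\lambda_{\mathbf{p}_i}\mathbf{p}_i$ with $\lambda_{\mathbf{p}_i}=(-1)^{k_i}\|M_{\mathbf{p}_i}\mathbf{p}_i\|$ a real number; this is purely the periodic-point version and requires no new idea.

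Next comes the \emph{if} direction. Given $\mathbf{z}_0\neq\mathbf{0}$ with unit direction $\mathbf{s}_0=\mathbf{z}_0/\|\mathbf{z}_0\|$, the covering hypothesis places $\mathbf{s}_0\in\mathcal{S}(\mathbf{p}_{i_0})$ for some $i_0$, so the projected orbit satisfies $\varPi_g^{mk_{i_0}}(\mathbf{s}_0)\to\mathbf{p}_{i_0}$. Applying the norm-ratio computation \eqref{eq:zplusz} to the period map $\varPi_g^{k_{i_0}}$ gives $\|\mathbf{z}_{(m+1)k_{i_0}}\|/\|\mathbf{z}_{mk_{i_0}}\|\to|\lambda_{\mathbf{p}_{i_0}}|<1$; interpolating the at most $k_{i_0}-1$ intermediate steps with the uniform one-step bound $\|g(\mathbf{z})\|\le e^{\|A\|\Delta_M}\|\mathbf{z}\|$ (Assumption $A$ provides $I\le\Delta_M$) gives $\|\mathbf{z}_k\|\to0$, that is, attractivity. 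For the \emph{only if} direction I would take $\mathbf{z}_0=\alpha_0\mathbf{p}_j$ for a periodic point with $|\lambda_{\mathbf{p}_j}|\ge1$; since $M(c\mathbf{p})=M(\mathbf{p})$ by property $(i)$ of \eqref{eq:propHom}, the orbit stays radial and $\|\mathbf{z}_{mk_j}\|=|\lambda_{\mathbf{p}_j}|^m\|\mathbf{z}_0\|\not\to0$, so the origin is not asymptotically stable for ${\cal DH}_\tau^{cl}$ and Proposition 4.1 denies asymptotic stability of $\mathcal{A}_{\mathbf{0}}$.

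The main obstacle is the \emph{stability} part ($\varepsilon$–$\delta$), as opposed to mere attractivity, and specifically its uniformity over directions. By homogeneity of degree one, $\|\mathbf{z}_k\|=\|\mathbf{z}_0\|\,\|g^k(\mathbf{s}_0)\|$, so stability of the origin is equivalent to finiteness of $\Gamma:=\sup_{\mathbf{s}\in{S}^{n-n_\rho-1}}\sup_{k\ge0}\|g^k(\mathbf{s})\|$. In Proposition 4.2 a single global attractor made the entry time $N$ uniform on the compact sphere, whereas here several orbits coexist and a direction lingering near a transversally repelling (source) orbit could a priori accumulate growth. The resolution I would argue is that the per-period radial multiplier near \emph{every} periodic orbit equals exactly $|\lambda_{\mathbf{p}_i}|<1$ by hypothesis, so lingering produces radial contraction rather than expansion regardless of the on-sphere (transversal) instability; combined with the covering hypothesis, the continuity of $g$, and compactness of the sphere (a finite subcover on which the entry time into a uniformly contracting neighbourhood of the periodic set is bounded), this furnishes a finite $\Gamma$, hence stability, and the geometric decay $\|\mathbf{z}_k\|\le\gamma\lambda^k\|\mathbf{z}_0\|$ exactly as in Proposition 4.2.

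Finally, Proposition 4.1 lifts global asymptotic stability of $\{\mathbf{0}\}$ for ${\cal DH}_\tau^{cl}$ to global asymptotic stability of $\mathcal{A}_{\mathbf{0}}$ for ${\mathcal H}^{cl}_\tau$, with basin of attraction $\mathds{R}^n\times\{1,-1\}\times[0,\infty)$, which completes the equivalence. I expect every step except the uniformity of $\Gamma$ to be a routine transcription of Proposition 4.2, so I would spend the bulk of the write-up making the compactness/continuity argument for $\Gamma<\infty$ rigorous and emphasizing that it is precisely the assumption $|\lambda_{\mathbf{p}_i}|<1$ for \emph{all} $i$ that prevents any orbit, even one trapped near a source on the sphere, from radially diverging.
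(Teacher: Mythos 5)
Your proposal is correct and takes essentially the same approach as the paper: the paper gives no separate proof of this corollary, stating only that it ``follows using similar arguments to the proof of Prop.\ 4.2,'' and your orbit-by-orbit replication of that argument (the eigenvector claim via the return map $\varPi_g^{k_i}$, the \emph{if} direction via the covering hypothesis, the \emph{only if} direction via radial initial conditions $\mathbf{z}_0=\alpha_0\mathbf{p}_j$, all lifted through Proposition 4.1) is precisely that. Your extra care about the uniformity over directions of the stability constant $\Gamma$ --- needed because source orbits make entry times direction-dependent --- addresses a point the paper glosses over even in Proposition 4.2 itself, so it is a refinement of, not a departure from, the paper's route.
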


\subsection{A case study with the Horowitz reset controller}
For the Example 3.3 without exogenous inputs, the reset control system ${\mathcal H}^{cl}_\tau$ is given by the matrices  $A$, $A_R$ and $C$ obtained by removing the first column and the first row of the matrices in \eqref{eq:AARex}, that is   
\begin{equation}
\begin{array}{ccc}
 {A}=\left(
\begin{array}[c]{ccc}%
-1& 1&1\\
 -4 &0 &0 \\
-1&0&0  %
\end{array}
\right),  &
{A_R}=\left(
\begin{array}[c]{ccc}%
1&0&0\\
0&1&0\\
0&0&0  %
\end{array}
\right), &
 C=\left(
\begin{array}[c]{ccc}%
-1&0&0%
\end{array}
\right), 
 \end{array} \label{eq:AARex2}
 \end{equation}
Here $A$ has eigenvalues $\lambda_{1,2} = -\frac{1}{2} \pm j\frac{\sqrt{19}}{2}$ and $\lambda_{3} = 0$. 
After preparing ${\mathcal H}^{cl}_\tau$ according to Assumption A, 
an oscillatory error signal is always produced; moreover,  it is not difficult to obtain that $\Delta_M = \frac{2\pi}{\sqrt{19}}+ \tau_m$ (it is the value of the minimum dwell-time added to half oscillation of the error signal). 
In this case $n-n_\rho-1 =1$, and thus $\varPi_g$ maps values of ${\mathbf s}$ in the unit circle ${S}^1$.  Moreover, by using the parametrization ${\mathbf s} = [\cos(\theta),\sin(\theta)]$, for $\theta \in [-\pi,\pi]$, it is possible to redefine the angle map (with some abuse of notation) as $\varPi_g:[-\pi,\pi] \rightarrow [-\pi,\pi]$, and $\theta^+ = \varPi_g(\theta)$. 
Fig. \ref{fig:puntosfijosHor} shows the graph of $\varPi_g$ for two values of $\tau_m$. 

For $\tau_m = 0.25$, $\varPi_g$ has a jump discontinuity at the point $\theta = d$; this point corresponds to the value of $\theta$ such that $Ce^{A\tau_m}[\cos(\theta),\sin(\theta),0] = 0$, and it can be obtained is a closed-form after some computation, it is given by $d = \pi + \frac{1}{2}(1-\frac{{\sqrt 19}}{\tan (\frac{{\sqrt 19}}{8})})$. Moreover,  $\varPi_g$ has a unique fixed point at $\theta = p = \pi/2$, which is a sink with a basin of attraction $\mathcal{S}(\pi/2) = [-\pi,\pi]$. The direct application of Prop. 4.2 results in that the reset control system ${\mathcal H}^{cl}_{\tau}$ is globally asymptotically stable, since $I(p) = I(\pi/2) =  2 \pi / {\sqrt 19}$ and 
\begin{equation}
M_p = J^\top e^{A\cdot I(p)}J = 
\left(
\begin{array}{cc}
-0.4863  & 0      \\
  0& -0.1891    
 \end{array}
\right)
\label{eq:S025}
\end{equation}
has an eigenvector $(1,0)$ (corresponding to $p = \pi/2$) with eigenvalue $-0.4863$.


\begin{figure}[t]
\centering
{\includegraphics[width=1\textwidth]{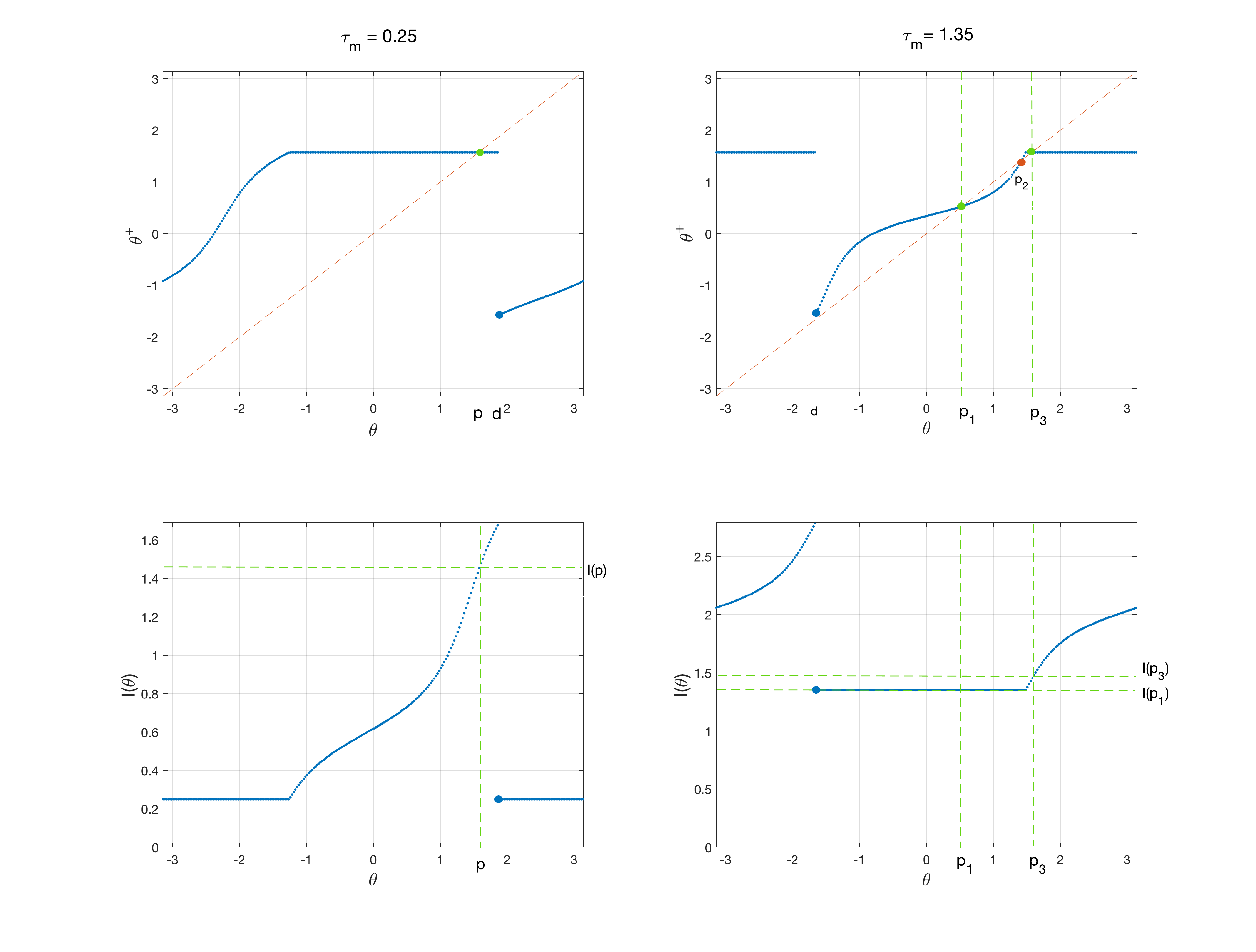}}
\caption{ Graphs of the maps $\varPi_g$ ({\em top}) and $I$ ({\em bottom}) for the reset control system of Example 3.3:  ({\em left}) $\tau_m = 0.25$, ({\em right}) $\tau_m = 1.35$- }
\label{fig:puntosfijosHor}
\end{figure}

For $\tau_m = 1.35$, the mapping $\varPi_g$ exhibits a more complex structure (Fig. \ref{fig:puntosfijosHor}). Besides a jump discontinuity at $\theta = d = - \pi + \frac{1}{2}(1-\frac{{\sqrt 19}}{\tan (\frac{{\sqrt 19}}{2}\cdot1.35)})$, $\varPi_g$ has three fixed points: $p_1 \approx 0.5322$, $p_2  \approx 1.4246$, and $p_3 = \pi/2$. Both $p_1$ and $p_3$ are sinks, while $p_2$ is a source. Their basins of attraction  are ${\mathcal S}(p_1)= [d,p_2)$, ${\mathcal S}(p_2) = \{p_2\}$ and ${\mathcal S}(p_3)= [-\pi,d) \cup (p_2,\pi]$. Finally, $I(p_1) = I(p_2) = \tau_m = 1.35$ and $I(p_3) = 2\pi/\sqrt{19}$. As a result, since ${\mathcal S}(p_1)\cup{\mathcal S}(p_2)\cup{\mathcal S}(p_3)= [-\pi,\pi]$, and the three matrices
\begin{equation}
M_{p_1} = J^\top e^{A \cdot I(p_1)} J= 
\left(
\begin{array}{ccc}
-0.5222  & 0.0463     \\
  -0.1850& -0.1808     \\
 \end{array}
\right),
\end{equation}
$M_{p_2} = M_{p_1}$, and $M_{p_3} = A_Re^{A \cdot I(p_3)} = M_p$ (as given in \eqref{eq:S025}) are Schur matrices (and thus all eigenvalues are strictly inside the unit circle), it also follows that ${\mathcal H}^{cl}_{\tau}$ is globally asymptotically stable for $\tau_m = 1.35$.

It turns out that the fixed points and periodic point patterns are heavily influenced by the minimum dwell-time $\tau_m$ as it should be expected. Although, in principle, in control practice $\tau_m$ is initially used to avoid defective solutions and a small value of it is all what is needed (and following the above analysis is not difficult to show that  ${\mathcal H}^{cl}_\tau$ is globally asymptotically stable for any $\tau_m < 1.35$), it is illustrative to analyze how the periodic point patterns change with it. Although an exhaustive analysis is out of scope of this work and it will be given elsewhere, there are several bifurcation points delimiting zones with one sink, with two sinks plus a source, with periodic-2 sinks,  with periodic-3 sinks, etc.  To conclude the example, a case with periodic-3 sinks is considered in the following. 


For $\tau_m = 2$, there are three periodic-3 points (Fig. \ref{fig:puntosfijosHor2}). They are $p_1  \approx -1.5494$, $p_2 \approx -0.2162$, and $p_3  = \pi/2$. 
For $p_1$, its periodic-3 orbit  $\{p_1,\varPi_g(p_1),\varPi_g^2(p_1)\} = \{p_1,p_3,p_2\}$ is a sink and, in addition, $I(p_1) \approx 2.9042$, and $  I(p_3) = I(p_2) =\tau_m = 2$, and
\begin{equation}
M_{p_1} = J^\top e^{A \cdot I(p_2)} A_Re^{A \cdot I(p_3)} A_Re^{A \cdot I(p_1)} J = 10^{-2} \cdot
\left(
\begin{array}{ccc}
-2.2711  & 0.0337    \\
  0.0011& -3.8597    \\
\end{array}
\right),
\end{equation}
is a Schur matrix. For $p_2$ and $p_3$,  its periodic-3 orbits  are $\{p_2, p_1,p_3\}$ and $\{p_3, p_2,p_1\}$, respectively. They are also sinks, and the matrices $M_{p_2} = J^\top e^{A \cdot I(p_3)} A_Re^{A \cdot I(p_1)} A_Re^{A \cdot I(p_2)} J$ and $M(p_3) = J^\top e^{A \cdot I(p_1)} A_Re^{A \cdot I(p_2)} A_Re^{A \cdot I(p_3)}J$ can be easily checked to be Schur matrices. Finally, applying Corollary 4.3,  using the fact that the union of the three basins of attraction is $[-\pi,\pi]$, it results that ${\mathcal H}^{cl}_{\tau}$ is globally asymptotically stable for $\tau_m = 2$.

\begin{figure}[t]
\centering
{\includegraphics[width=\textwidth]{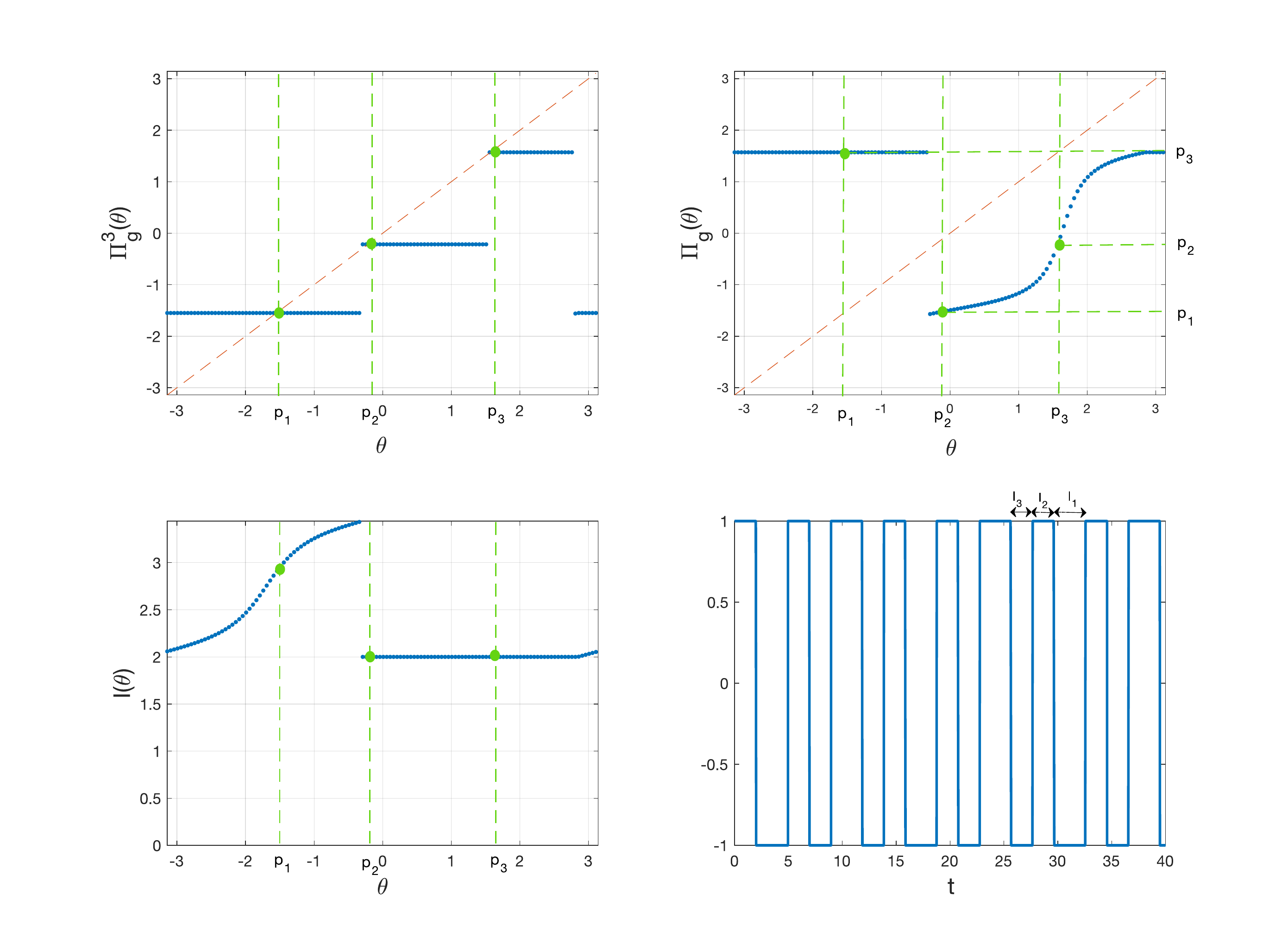}}
\caption{Reset control system of Example 3.3 for $\tau_m = 2$: ({\em top}) Graph of the mapping $\varPi_g^3$ ({\em left}), showing the periodic-3 points $p_1$, $p_2$, and $p_3$,  and graph of $\varPi_g$ ({\em right}); ({\em bottom}) graph of the map $I$ ({\em left}), and plot showing the periodicity of the reset intervals ({\em rigth}), where $I_3 = I(p_3) = 2$, $I_2 = I(p_2 ) = 2$, and $I_1 = I(p_1)\approx 2.9042$.}
\label{fig:puntosfijosHor2}
\end{figure}


\subsection{A case with a chaotic sequence of reset intervals}
Obviously,  Prop. 4.2-Corol. 4.3 are convenient in practice for those cases in which periodic orbits of $\varPi_g$ can be found with a reasonable effort (for a given value of $\tau_m$), like in the cases analyzed above. Although, in general, the result is useful in many practical cases,  even some low-order reset control systems may exhibit extraordinarily complex interval patterns that makes elusive its application in those cases, motivating the investigation of alternative stability criteria. In the following, it is analyzed a reset control system consisting of a FORE and a third order plant, that produces chaotic sequences of reset intervals. 
Consider the time-regularized reset control system $\mathcal{H}_{\tau}^\text{cl}$, as given by \eqref{eq:Hclrho}, with no exogenous inputs and with $\tau_m = 0.1$, and with:
\begin{equation}
\begin{array}{ccc}
 {A}=\left(
\begin{array}[c]{cccc}%
0&0&3.5&5\\
 1&0& -4.3&1\\
 0&1 &-1 &0 \\
0&0&-1&-1  %
\end{array}
\right),  &
{A_R}=\left(
\begin{array}[c]{cccc}%
1&0&0&0\\
0&1&0&0\\
0&0&1&0\\
0&0&0&0  %
\end{array}
\right), &
 C=\left(
\begin{array}[c]{cccc}%
0&0&-1&0%
\end{array}
\right), 
 \end{array} \label{eq:AARcaos}
 \end{equation}
This reset control system, with state $(\mathbf{x},q,\tau) = (\mathbf{x}_p,x_r,q,\tau)$, is the feedback composition of a FORE with state $(x_r,q)$ and a third order plant with state $\mathbf{x}_p = (x_1,x_2,x_3)$. Here, the mapping $\varPi_g: {S}^2 \rightarrow  {S}^2  $ has an invariant set ${\cal T} = \{ (\frac{t}{\sqrt{1+t^2}}, \frac{1}{\sqrt{1+t^2}},0)  \in {S}^2: t \in [-3,4]\}$, that is $\varPi_g({\cal T}) \subset {\cal T}$.  And thus, $\varPi_g$ can be parameterized (with some abuse of notation) as $\varPi_g:[-3,4] \rightarrow [-3,4]$, where $t^+ = \varPi_g(t)$ (see Fig. \ref{fig:caos}). 
\begin{figure}[t]
\centerline{\includegraphics[width=\textwidth]{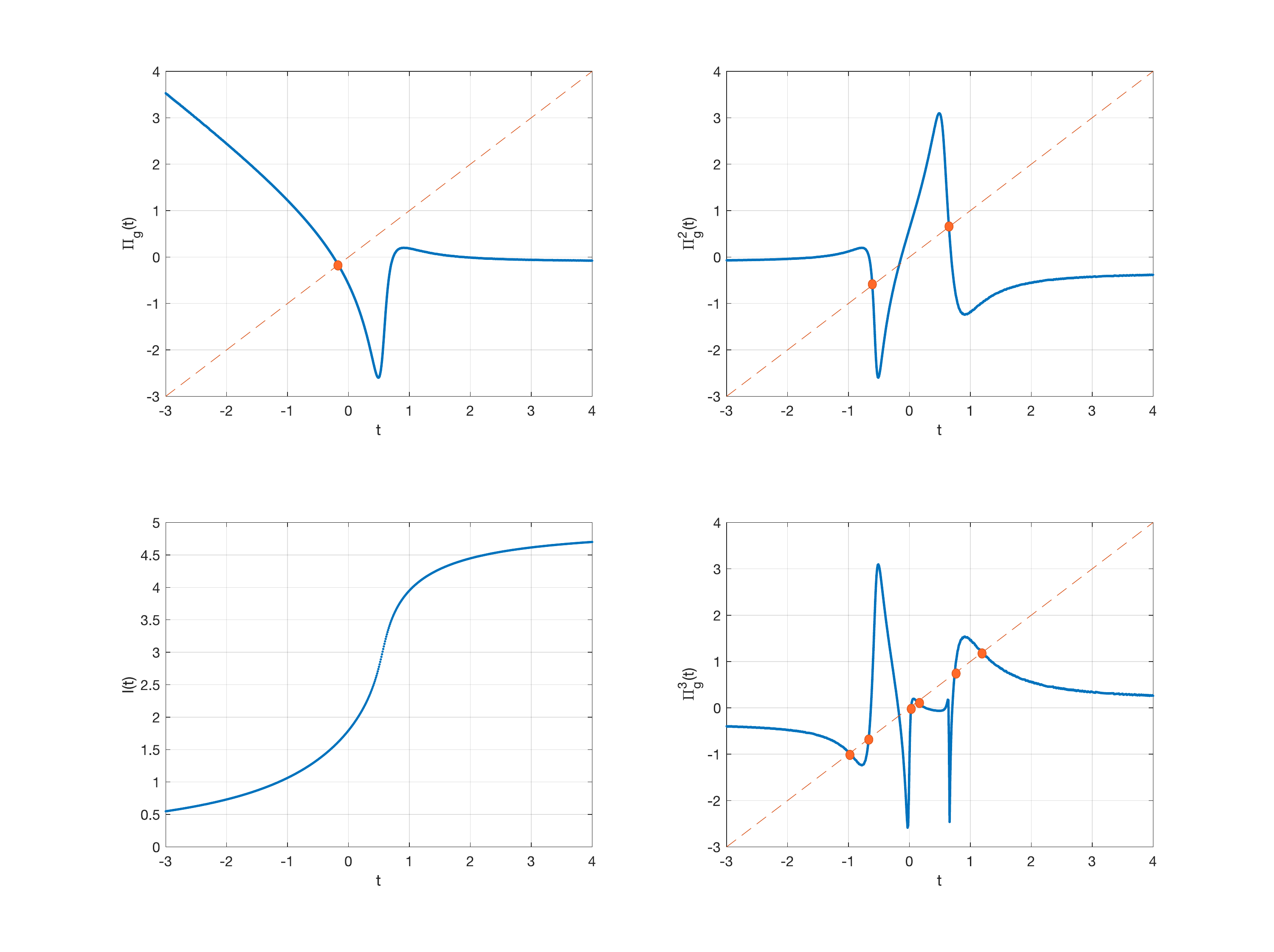}}
\caption{An example of reset control system with a chaotic sequence of reset intervals: graphs of the mapps $\varPi_g$, $\varPi_g^2$, and $\varPi^3_g$ (with marks at the periodic points), and $I$.}
\label{fig:caos}
\end{figure}


For this case, $\varPi_g$ is a continuous mapping on $[-3,4]$, its graph is shown in Fig. \ref{fig:caos}. 
Moreover the mapping $I:[-3,4] \rightarrow [0.1,\infty)$, such as $I(t)$ (also with some abuse of notation) defines the reset interval from the point $\mathbf{s} = (\frac{t}{\sqrt{1+t^2}}, \frac{1}{\sqrt{1+t^2}},0)$. Since the mapping $\varPi_g$ is continuous on the interval $[-3,4]$ and has periodic-$3$ points (see Fig \ref{fig:caos}), then it turns out that there exists solutions with any possible integer period according to Sharkovskii theorem (\cite{Shark64},\cite{Alligood00}), that is there exists periodic-$k$ points for $k= 1, 2, 3, \cdots$. For example, Fig. \ref{fig:caos} shows the graphs of $\varPi_g$, $\varPi_g^2$, and $\varPi_g^3$, explicitly marking 1 periodic-1 point (a fixed point), 2 periodic-2 points, and 6 periodic-3 points. Note that all the periodic points are sources, and in fact this is the case for any periodic point since, as it is well known, period 3 implies chaos \cite{LiYorke75}. As a result, any initial point of $\mathcal{H}_{\tau}^\text{cl}$, with $\xi \in (\frac{t}{\sqrt{1+t^2}}, \frac{1}{\sqrt{1+t^2}},0,0) \times \{1\} \times \{0\}$, and $t\in [-3,4]$, will produce solutions with chaotic sequences of reset intervals, making elusive to apply Prop. 4.2.

\subsection{Reset-times dependent stability conditions: Minimum dwell-time}
An alternative approach to stability analysis of ${\mathcal H}^{cl}_{\tau}$ will be based on the use of Lyapunov functions, with an explicit consideration of the reset interval sequences $\tau_\phi = \{\tau_1, \tau_2, \cdots\}$ corresponding to solutions $\phi = ({\mathbf x},q,\tau)$ to ${\mathcal H}^{cl}_{\tau}$. This approach is based on the reset-times dependent stability criteria early developed in \cite{Banos11,BBbook12}. The set of all possible reset interval sequences is defined as 
\begin{equation}
S_{{\mathcal H}^{cl}_{\tau}}= \{\tau_\phi = \{\tau_1, \tau_2, \cdots \} \subset {\mathds R}_{\geq 0}: \tau_i = \tau(t_i,i), (t_i,i) \in \text{ dom } \phi, \phi \text{ is a solution to } {\mathcal H}^{cl}_{\tau} \}
\label{eq:SHtau}
\end{equation}

In the case in which $A$ is a Hurwitz matrix, a first strategy consists in embedding the set of reset intervals sequences in a larger set characterized by the minimum dwell-time associated to ${\mathcal H}^{cl}_{\tau}$. It is defined the set $S_{\tau_m}$ as
\begin{equation}
S_{\tau_m} = \{\{\tau_1, \tau_2, \cdots\} \subset {\mathds R}_{\geq 0}: \tau_i \geq \tau_m \}
\end{equation}
and then stability conditions are considered for any posible reset interval sequence in $S_{\tau_m}$. This approach will allow a direct application of computationally efficient methods imported from the impulsive systems literature.  And although, in principle, results may be  conservative due to the fact that $S_{{\mathcal H}^{cl}_{\tau}}$ is a meager set compared to $S_{\tau_m} \supset S_{{\mathcal H}^{cl}_{\tau}}$, in practice they allow to obtain a first value of $\tau_m$ for which stability is guaranteed.

\begin{proposition}
The set ${\cal A}_0$ is globally asymptotically stable for 
${\mathcal H}^{cl}_{\tau}$ if there exist a sequence of positive definite matrices $\{P_1,P_2, \cdots\}$, such that
\begin{equation}
\begin{array}{l}
  \eta I \leq P_k \leq \rho I    \\
  \\
  e^{A^T \tau_k} A_R P_{k+1} A_R e^{A \tau_k} - P_k \leq -\varepsilon I
\end{array}
\label{eq:prop43a}
\end{equation}
hold for $k = 1, 2, \cdots$, for some positive constants $\eta$, $\rho$, and $\varepsilon$, and any $\{\tau_1, \tau_2, \cdots\} \in S_{\tau_m}$.
\end{proposition}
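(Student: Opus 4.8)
The plan is to certify global asymptotic stability of ${\cal A}_{\mathbf 0}$ by exhibiting a time-varying quadratic Lyapunov function that strictly decreases along the after-reset samples of \emph{every} solution, and then to transfer this discrete decrease to the whole hybrid arc using the flow bound already established in the proof of Proposition 4.1. First I would fix an arbitrary solution $\phi = (\mathbf{x},q,\tau)$ to ${\mathcal H}^{cl}_\tau$, prepared according to Assumption $A$ so that $\mathbf{x}(0,0) = J\mathbf{z}_0$, $\tau(0,0)=0$ and $q(0,0)=1$, and read off its reset interval sequence $\tau_\phi = \{\tau_1,\tau_2,\dots\}$. Because time regularization forces $\tau_k \geq \tau_m$ for every $k$, we have $\tau_\phi \in S_{\tau_m}$, so the hypothesized inequalities \eqref{eq:prop43a} hold verbatim for this realized sequence with the given matrices $\{P_k\}$. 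This is the crucial point: since a single certificate $\{P_k\}$ is required to work for every sequence in $S_{\tau_m} \supseteq S_{{\mathcal H}^{cl}_\tau}$, the same Lyapunov function will bound all solutions simultaneously, which is exactly what is needed in a setting where solutions from a given initial point need not be unique.

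Next I would set $V_k(\mathbf{x}) = \mathbf{x}^\top P_k \mathbf{x}$ and evaluate it along the after-reset samples. From the jump map $\mathbf{x}^+ = A_R\mathbf{x}$ (with $A_R^\top = A_R$) and the flow $e^{A\tau_k}$, consecutive samples obey $\mathbf{x}(t_k,k) = A_R e^{A\tau_k}\mathbf{x}(t_{k-1},k-1)$, and assigning the matrix $P_k$ to the sample $\mathbf{x}(t_{k-1},k-1)$ the second inequality of \eqref{eq:prop43a} yields directly
\begin{equation}
V_{k+1}(\mathbf{x}(t_k,k)) \leq V_k(\mathbf{x}(t_{k-1},k-1)) - \varepsilon \|\mathbf{x}(t_{k-1},k-1)\|^2 .
\end{equation}
Combining this with the sandwich bound $\eta I \leq P_k \leq \rho I$ converts the additive decrease into the geometric one $V_{k+1} \leq (1-\varepsilon/\rho)V_k$, so that $V_k$, and hence $\|\mathbf{x}(t_{k-1},k-1)\|$, decays exponentially in $k$; monotonicity of $V_k$ alone already gives the uniform bound $\|\mathbf{x}(t_{k-1},k-1)\| \leq \sqrt{\rho/\eta}\,\|\mathbf{x}(0,0)\|$ needed for stability.

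Finally I would pass from the samples to the full hybrid arc. The Gronwall estimate behind \eqref{eq:xnorm} gives $\|\mathbf{x}(t,j)\| \leq \|\mathbf{x}(t_j,j)\| e^{\alpha(t-t_j)}$, and since here $A$ is Hurwitz the factor $\sup_{s\geq 0}\|e^{As}\|$ is finite, so the inter-sample values are controlled by the sample values. The uniform sample bound then yields $\varepsilon$-$\delta$ stability of ${\cal A}_{\mathbf 0}$ (using $\|\phi(t,j)\|_{{\cal A}_{\mathbf 0}} = \|\mathbf{x}(t,j)\|$ and $\|\xi\|_{{\cal A}_{\mathbf 0}} = \|\mathbf{x}(0,0)\|$), while the exponential sample decay, together with the absence of Zeno solutions for $\tau_m>0$ established above, gives $\|\phi(t,j)\|_{{\cal A}_{\mathbf 0}} \to 0$ as $t+j\to\infty$: if $\phi$ jumps infinitely often the samples tend to zero and the flow factor is bounded, while if $\phi$ eventually only flows the Hurwitz dynamics drive it to zero. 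As these bounds are independent of the initial condition and of the particular solution, global asymptotic stability of ${\cal A}_{\mathbf 0}$ follows; alternatively the exponential sample decay feeds straight into Proposition 4.1. I expect the main obstacle to be conceptual rather than computational, namely handling the quantifier structure of \eqref{eq:prop43a} correctly: the inequalities are a robust, parameter-dependent condition in which each $\tau_k$ ranges over all of $[\tau_m,\infty)$, and it is precisely this robustness --- rather than knowledge of any specific reset sequence --- that legitimizes the embedding $S_{{\mathcal H}^{cl}_\tau}\subseteq S_{\tau_m}$ and lets one control the entire, generally non-unique, solution family with a single matrix sequence $\{P_k\}$.
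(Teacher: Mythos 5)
Your proposal is correct and takes essentially the same approach as the paper: embed the realized reset interval sequences into $S_{\tau_m}$, use the time-varying quadratic Lyapunov function $V_k(\mathbf{x})=\mathbf{x}^\top P_k \mathbf{x}$ to get geometric decay of the after-reset samples, and transfer this decay to the whole hybrid arc (the paper does the discrete step by citing Rugh's theorem and the transfer by citing Proposition 4.1, which you also invoke as an alternative). The only differences are presentational: you work with the full-state samples $\mathbf{x}(t_k,k)$ and inline the Gronwall/Hurwitz argument, whereas the paper states the decay in the reduced $\mathbf{z}$-coordinates of the Poincar\'e-like map ${\cal DH}_\tau^{cl}$ and then applies Proposition 4.1, which also covers the non-Hurwitz branch of Assumption A.
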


\begin{proof}
Firstly, it is a standard result (\cite{Rugh96}) that if \eqref{eq:prop43a} hold then the time-dependent quadratic Lyapunov function $V(\mathbf{x},k)=\mathbf{x}^TP_k \mathbf{z}$ certifies that every discrete-time (time-varying) system 

\begin{equation}
\left \{
\begin{array}{l}
 \mathbf{z}^+  
 = -J^T A_Re^{A\tau_k}
J \mathbf{z}  \\

k^+ = k+1    
\end{array}
\right.
\label{eq:prop43b}
\end{equation}
with $\{\tau_1, \tau_2, \cdots\} \in S_{\tau_m}$, is globally asymptotically stable.

Now, since it is true that $I(\mathbf{z}) > \tau_m$ for any $\mathbf{z} \in \mathds{R}^{n_p+n_{\hat{\rho}}}$ (it directly follows from \eqref{eq:I}), and thus the solution $\mathbf{z}$ to  ${\cal DH}_\tau^{cl}$ with $\mathbf{z}(0) = \mathbf{z}_0$ corresponds with the solution to a discrete-time system like \eqref{eq:prop43b} with $\{\tau_1, \tau_2, \tau_3, \cdots\}= \{I(\mathbf{z}_0), I(g(\mathbf{z}_0)), I((g^2(\mathbf{z}_0)),\cdots \} \in S_{\tau_m}$, then it follows from \eqref{eq:prop43a} that  (see \cite{Rugh96}, Th. 23.3)
\begin{equation}
\|\mathbf{z}(k)\|^2 = \|g^k(\mathbf{z}_0)\|^2   \leq \frac{\rho}{\eta}\lambda^{2k}\|\mathbf{z}_0\|^2
\end{equation}
for $k \geq 0$, where $\lambda  <1$. 
Since the constants $\rho$, $\eta$, and $\lambda$ do not depend on $\mathbf{z}_0$ then it directly follows that the origin $\{\mathbf{0}\}$ is globally asymptotically stable for the discrete time system ${\cal DH}_\tau^{cl}$. Application of Prop. 4.1 ends the proof.
$\Box$ \\

\end{proof}

%
%

%
%


Prop. 4.4 gives a nice and simple connection between stability of the reset control system ${\mathcal H}^{cl}_{\tau}$ and stability of impulsive systems with impulses at fixed instants, since condition \eqref{eq:prop43a} can be easily linked with the stability of an impulsive system with impulses at instants $t_k = t_{k-1} + \tau_k$, $k = 1, 2, \cdots$. Moreover, the following Corollary directly follows.

\begin{corollary} Assume that \eqref{eq:prop43a} hold for $k = 1, 2, \cdots$, for  $\eta$, $\rho$, $\varepsilon >0$, and for any $\{\tau_1, \tau_2, \cdots\} \in S_{\tau_m^\star}$. Then the set ${\cal A}_0$ is globally asymptotically stable for 
${\mathcal H}^{cl}_{\tau}$, for any $\tau_m \geq \tau_m^*$.

\end{corollary}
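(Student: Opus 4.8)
The plan is to deduce the corollary from Proposition 4.4 by a pure set-inclusion (monotonicity) argument, so that essentially no new analysis is needed. The single observation driving everything is that the family of admissible reset-interval sequences shrinks as the minimum dwell-time grows: if $\tau_m \geq \tau_m^\star$, then every sequence $\{\tau_1,\tau_2,\cdots\}$ with $\tau_i \geq \tau_m$ satisfies $\tau_i \geq \tau_m^\star$ as well, so $S_{\tau_m} \subseteq S_{\tau_m^\star}$. First I would record this inclusion explicitly.

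Next I would transport the hypothesis across the inclusion. By assumption there exist a matrix family $\{P_1,P_2,\cdots\}$ and positive constants $\eta,\rho,\varepsilon$ for which \eqref{eq:prop43a} holds along every sequence in the larger set $S_{\tau_m^\star}$. Because $S_{\tau_m} \subseteq S_{\tau_m^\star}$, the same matrices and constants make \eqref{eq:prop43a} hold \emph{a fortiori} along every sequence in $S_{\tau_m}$; that is, the exact hypothesis of Proposition 4.4 is met for the system $\mathcal{H}^{cl}_\tau$ equipped with minimum dwell-time $\tau_m$. Applying Proposition 4.4 then yields that $\mathcal{A}_0$ is globally asymptotically stable for $\mathcal{H}^{cl}_\tau$ with this $\tau_m$, and since $\tau_m \geq \tau_m^\star$ was arbitrary the claim follows.

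There is no genuine obstacle here; the only care-point is to confirm that the reset-interval sequences actually produced by $\mathcal{H}^{cl}_\tau$ with dwell-time $\tau_m$ really lie in $S_{\tau_m}$. This is exactly the inequality $I(\mathbf{z}) \geq \tau_m$ built into the definition \eqref{eq:I} and already exploited in the proof of Proposition 4.4, giving the chain $S_{\mathcal{H}^{cl}_\tau} \subseteq S_{\tau_m} \subseteq S_{\tau_m^\star}$. I would also note that the argument is insensitive to whether the Lyapunov matrices $\{P_k\}$ are required uniformly or allowed to depend on the sequence, since the inclusion simply lets each sequence in $S_{\tau_m}$ reuse the certificate already guaranteed for it as an element of $S_{\tau_m^\star}$.
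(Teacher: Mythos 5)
Your proposal is correct and matches the paper's (implicit) argument: the paper states the corollary ``directly follows'' from Proposition 4.4, and the intended reasoning is precisely your monotonicity observation that $\tau_m \geq \tau_m^\star$ gives $S_{\mathcal{H}^{cl}_{\tau}} \subseteq S_{\tau_m} \subseteq S_{\tau_m^\star}$, so the hypothesis transfers a fortiori and Proposition 4.4 applies. Your added care-point about $I(\mathbf{z}) \geq \tau_m$ and the insensitivity to whether the certificate $\{P_k\}$ is uniform or sequence-dependent is a sound and complete account of why the step is immediate.
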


A particular simple instance of \eqref{eq:prop43a} is to consider a time independent Lyapunov function, that is $P_k = P$, for any $k = 1, 2, \cdots$. In this case, the procedure for solving \eqref{eq:prop43a} is reduced to searching for a matrix $P>0$ such as 
\begin{equation}
  e^{A^T \tau} A_R P A_R e^{A \tau} - P \leq -\varepsilon I
\label{eq:prop43c}
\end{equation}
for some $\varepsilon > 0$ and any $\tau \geq \tau_m$ (or $\tau \geq \tau_m^*$)
This problem is well-known in the literature and there exists several good methods for its resolution (\cite{libroResetBB,Banos11,Dash2013,Briat2013}. The next result is directly imported from \cite{Briat2013}. 

\begin{corollary}
Assume that one of the following conditions applies:
\begin{itemize}
\item There exist a differentiable matrix function $R:[0,\tau_m^*] \rightarrow \mathds{S}^n$, $R(0) > 0$, and $\varepsilon > 0$ such that


\begin{equation}
\begin{array}{l}
 A^TR(0) + R(0)A < 0 \\
 A^TR(\theta) + R(\theta)A+ {\dot R}(\theta) \leq 0, \\
 A_R R(0) A_R - R(\tau_m^*) \leq -\varepsilon I, 
\end{array}
\label{eq:cor1a}
\end{equation}
hold for any $\theta \in [0,\tau_m^*]$.

\item There exist a differentiable matrix function $S:[0,\tau_m^*] \rightarrow \mathds{S}^n$, $S(0) > 0$, and $\varepsilon > 0$ such that


\begin{equation}
\begin{array}{l}
A^TS(\tau_m^*) + S(\tau_m^*)A < 0 \\
A^TS(\theta) + S(\theta)A - {\dot S}(\theta) \leq 0, \\
A_R S(\tau_m^*) A_R - S(0) \leq -\varepsilon I, 
\end{array}
\label{eq:cor1b}
\end{equation}
hold for any $\theta \in [0,\tau_m^*]$.
\end{itemize}
then the set ${\cal A}_{\mathbf{0}}$ is globally asymptotically stable for ${\cal H}_\tau^{cl}$, for any $\tau_m \geq \tau_m^*$.
\end{corollary}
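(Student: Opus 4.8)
The plan is to reduce the statement, via Prop.~4.1, to global asymptotic stability of the origin for the discrete-time system ${\cal DH}_\tau^{cl}$, and then to recognize ${\cal DH}_\tau^{cl}$ as the after-jump (Poincar\'e) dynamics of a linear impulsive system with flow matrix $A$, jump matrix $A_R$, and dwell-times $\tau_k = I(\mathbf{z}_k)\ge \tau_m\ge\tau_m^*$. On the range of $J$ the full-state cycle map is $\mathbf{x}\mapsto A_R e^{A\tau_k}\mathbf{x}$ (the sign flip carried by $q$ is irrelevant for a quadratic certificate, and $\|J\mathbf{z}\|=\|\mathbf{z}\|$), so it suffices to prove uniform global exponential decay of the after-jump states for every admissible sequence $\{\tau_k\}\subset[\tau_m^*,\infty)$. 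Conditions \eqref{eq:cor1a}--\eqref{eq:cor1b} are exactly the minimum dwell-time conditions of \cite{Briat2013}, and the argument I would reproduce is the associated clock-dependent (timer-dependent) Lyapunov function construction.

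For the first condition I would take $V(\mathbf{x},s)=\mathbf{x}^\top M(s)\mathbf{x}$, where $s$ is the time elapsed since the last jump on a flow interval of length $T_k\ge\tau_m^*$, and I set $M(s)=R(0)$ for $s\in[0,T_k-\tau_m^*]$ (the ``overtime'') and $M(s)=R\bigl(s-(T_k-\tau_m^*)\bigr)$ on the terminal window of length $\tau_m^*$. The three inequalities are tailored to this split: $A^\top R(0)+R(0)A<0$ gives $\dot V<0$ on the overtime; $A^\top R(\theta)+R(\theta)A+\dot R(\theta)\le0$ gives $\dot V\le0$ on the window (and the common value $R(0)$ at the junction makes $M(\cdot)$ continuous); and at a jump, where the timer resets and $\mathbf{x}\mapsto A_R\mathbf{x}$, the inequality $A_R R(0)A_R-R(\tau_m^*)\le-\varepsilon I$ yields a strict drop $V^{+}-V^{-}\le-\varepsilon\|\mathbf{x}\|^2$. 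The positive definiteness needed to read off norm bounds comes for free: since $A$ is Hurwitz throughout this subsection, $A^\top R(0)+R(0)A<0$ forces $R(0)>0$ (consistent with the hypothesis), and the jump inequality gives $R(\tau_m^*)\succeq A_R R(0)A_R+\varepsilon I\succeq\varepsilon I>0$.

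Chaining these estimates, between consecutive after-jump instants I obtain $V^{-}_{k+1}\le V^{+}_{k}$ (non-increase along the flow) and, using $V^{-}_{k+1}\le\lambda_{\max}(R(\tau_m^*))\|\mathbf{x}_{k+1}\|^2$, the jump drop gives $V^{+}_{k+1}\le\bigl(1-\varepsilon/\lambda_{\max}(R(\tau_m^*))\bigr)V^{+}_{k}=\mu\,V^{+}_{k}$ with $\mu\in[0,1)$. Hence $V^{+}_k\to0$ geometrically, and since $V^{+}_k\ge\lambda_{\min}(R(0))\|\mathbf{x}^{+}_k\|^2$ the after-jump states (equivalently $\mathbf{z}_k$) decay geometrically for every admissible dwell sequence; Prop.~4.1 then yields global asymptotic stability of ${\cal A}_{\mathbf 0}$ for every $\tau_m\ge\tau_m^*$. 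The second condition is the time-reversed dual: one places the overtime immediately after the jump with the constant matrix $S(\tau_m^*)$ (so the strict inequality sits at $S(\tau_m^*)$) and uses the window $M(s)=S(\tau_m^*-s')$ on the last $\tau_m^*$ before the next jump, so that $\dot V\le0$ now reads $A^\top S(\theta)+S(\theta)A-\dot S(\theta)\le0$ while the jump links $S(0)$ to $S(\tau_m^*)$; Hurwitzness of $A$ again gives $S(\tau_m^*)>0$, and the identical geometric estimate applies.

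The main obstacle, and the reason a single constant certificate $P$ as in \eqref{eq:prop43c} is insufficient here, is securing a per-cycle decrease that is \emph{uniform} in the dwell time $T_k$. A naive attempt to bound $e^{A^\top T_k}A_R R(0)A_R e^{AT_k}-R(0)$ below by $-\varepsilon'I$ fails because $e^{AT_k}\to0$ as $T_k\to\infty$, so the decrease measured against $\|\mathbf{x}^{+}_k\|$ is not bounded away from zero. The clock-dependent function circumvents this by charging the strict drop against $\|\mathbf{x}_{k+1}\|$ at the jump while requiring only non-increase along the flow, so the contraction factor $\mu$ depends on $\varepsilon$ and $\lambda_{\max}(R(\tau_m^*))$ but not on $T_k$. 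Verifying that the window/overtime splitting renders $M(\cdot)$ continuous and that the three inequalities chain into $\dot V\le0$ with the correct endpoint matrices is the delicate bookkeeping step.
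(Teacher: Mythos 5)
Your proof is correct, and at the skeleton level it follows the same reduction as the paper: pass to the discrete-time after-jump system and settle it with a minimum dwell-time argument. The difference is in how the dwell-time step is handled. The paper's entire proof is a citation: it invokes Theorem 2.3 of Briat (2013) to pass from \eqref{eq:cor1a}/\eqref{eq:cor1b} to the constant-$P$ condition \eqref{eq:prop43c}, and then applies its Proposition 4.4 and Corollary 4.5. You instead reconstruct the content of that cited theorem: the clock-dependent Lyapunov function with the overtime/window splitting (constant $R(0)$ on the overtime, terminal window running from $R(0)$ to $R(\tau_m^*)$, strict drop $A_R R(0)A_R - R(\tau_m^*)\preceq -\varepsilon I$ at the jump, and the time-reversed version for $S$), chained per cycle directly into Prop.~4.1. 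The bookkeeping is right: continuity at the junction holds because both pieces equal $R(0)$ there, the contraction factor $\mu = 1-\varepsilon/\lambda_{\max}(R(\tau_m^*))$ lies in $[0,1)$ because the jump inequality forces $R(\tau_m^*)\succeq \varepsilon I$, the after-jump norm bounds only need $R(0)>0$ (positivity of intermediate $R(\theta)$ is never used), and all constants are uniform over dwell sequences in $[\tau_m^*,\infty)$, which is exactly what Prop.~4.1 requires. What each approach buys: the paper's route is one line and leans on an established result; yours is self-contained and makes transparent why non-increase along flows plus a uniform strict drop at jumps suffices, independently of the flow duration.

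One caveat: your closing paragraph overstates the obstacle to the constant-$P$ route. The algebra indeed only yields $e^{A^\top\tau}A_R R(0)A_R e^{A\tau}-R(0)\preceq -\varepsilon\, e^{A^\top\tau}e^{A\tau}$, which degenerates as $\tau\to\infty$; but in this subsection $A$ is Hurwitz (and is in fact forced to be, by $A^\top R(0)+R(0)A<0$ together with $R(0)>0$), so $e^{A^\top\tau}A_R R(0)A_R e^{A\tau}\to 0$ and is $\preceq \tfrac12\lambda_{\min}(R(0))I$ beyond some finite $T_0$, while on the compact interval $[\tau_m^*,T_0]$ the quantity is continuous and strictly negative definite. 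Hence a uniform margin $-\varepsilon' I$ does exist, and the paper's route through \eqref{eq:prop43c} and its dwell-time proposition is legitimate. The constant-$P$ path is therefore not a dead end, only one that needs this extra compactness/limit step, which your per-cycle chaining elegantly avoids.
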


\begin{proof}
It is a direct application of Th. 2.3 in \cite{Briat2013}, and Prop. 4.3-Corollary 4.4, for the case given by \eqref{eq:prop43c}. $\Box$
\end{proof}

This Corollary results in an efficient procedure for solving the ${\cal H}_\tau^{cl}$ stability problem, specifically when the matrix function $R$ (or $S$) is searched in the set of matrix polynomials with a given degree $d_R$, that is $R(\theta) = \sum_{i=0}^{d_R}R_i\theta^i$, $R_i >0$. In this case, \eqref{eq:cor1a} and \eqref{eq:cor1b} may be inserted in sum-of-squares conditions, and the problem can be solved by using some sum-of-squares programming package (see e.g. \cite{Praj2004}). 

\begin{example}
This is a classical example of reset control system, considered in early works about reset control like \cite{Beker04}, It consists of a feedback interconnection of a FORE and a second order plant. Here, it is defined as a time-regularized reset control system ${\cal H}_\tau^{cl}$ with 
\begin{equation}
\begin{array}{ccc}
A = 
\left(
\begin{array}{ccc}
0  & 0  & 1  \\
1  &-0.2   &1   \\
 0 & -1  & -1  
\end{array}
\right), 
 &  
 A_R =  
 \left(
\begin{array}{ccc}
1  & 0  & 0  \\
0  &1   &0   \\
 0 & 0  & 0 
\end{array}
\right),  
 & 
 C = 
 \left(
\begin{array}{ccc}
0  & -1  & 0  
\end{array}
\right)  
\end{array}
\end{equation}
and a minimum dwell-time $\tau_m$. Since A is a Hurwitz matrix, in fact it has two complex dominant eigenvalues at $-\frac{1}{10} \pm j \frac{3\sqrt{11}}{10}$, stability of the set $\{\mathbf{0}\}\times\{1,-1\}\times [0,\infty)$ will be investigated using Corollary 4.6. After working with SOSTOOLS, a value of $\tau_m^* = 0.6145$ is obtained, and thus global asymptotic stability is guaranteed for any $\tau_m \geq 0.6145$. 

As it is expected, result is somehow conservative. In this case, application of Prop. 4.2  for values $0 < \tau_m \leq 0.6145$ results in that there exist only one fixed point ${\mathbf p} = (-1,0)$ of the mapping $\varPi_g$ (and no other periodic points), and the corresponding matrix $M_{\mathbf p}  = J^\top e^{A\frac{\pi}{\beta}}J$, with $\beta = 3\frac{\sqrt{11}}{10}$, is a Schur matrix. Thus, Prop. 4.2 certifies global assymptotic stability o ${\cal H}_\tau^{cl}$ for smaller values of $\tau_m$ in comparison to Corollary 4.6. Interestingly, for values $\tau_m > \frac{\pi}{\beta} \approx 3.1574$, application of Prop. 4.2 is less direct. For increasing values of $\tau_m$ they appear two fixed points (one sink and a source), periodic-2 points, etc. (details are omitted by brevity). 

\end{example}

\subsection{Reset-times dependent stability conditions: Ranged dwell-time}
If $A$ is not Hurwitz then it is necessary to include a ranged dwell-time $\tau \in [\tau_m,\tau_M]$, enabling to flow only when $\tau \leq \tau_M$. In this case, it is considered  the reset control system $\overline{{\cal H}}_\tau^{cl}$,  which is a modification of ${\cal H}_\tau^{cl}$ including the maximum dwell-time $\tau_M$. $\overline{{\cal H}}_\tau^{cl}$ is given by 
\begin{equation} 
\overline{ {\mathcal H}}^{cl}_\tau:
\left \{
\begin{array}{lll}
\dot{\tau} = 1, &
\begin{array}{l}
  \dot{\mathbf{x}}  
\end{array}
=
\begin{array}{l}
  A {\mathbf x} \end{array}  
&\text{, } ({\mathbf x},q,\tau) \in \overline{{\mathcal C}}^{cl}_\tau
\\
\tau^+ = 0, &
\left(
\begin{array}{l}
  {\mathbf x}^+\\
  {q}^+
\end{array}  \right) =
\left(
\begin{array}{cc}
  A_R& 0\\
  0 &-1
\end{array} \right)
\left(
\begin{array}{l}
   {\mathbf x}\\
   {q}
\end{array} \right)&\text{, } ({\mathbf x},q,\tau) \in \overline{{\mathcal D}}^{cl}_\tau
\end{array} 
\right.
\label{eq:HclrhoM}
\end{equation}
where
\begin{equation}
\overline{{\mathcal C}}^{cl}_\tau = \mathcal{C}^{cl}\times [0,\tau_M] \cup \mathcal{D}^{cl}\times [0,\tau_m] 
\label{eq:CclrhoM}
\end{equation}
and 
\begin{equation}
\overline{{\mathcal D}}^{cl}_\tau =  \mathcal{C}^{cl}\times [\tau_M,\infty)  \cup \mathcal{D}^{cl} \times [\tau_m,\infty). 
\label{eq:DclrhoM}
\end{equation}

\vspace{0.5cm}

Now, it is defined the set of time sequences $S_{[\tau_m,\tau_M]}$, given by 
\begin{equation}
S_{[\tau_m,\tau_M]} = \{\{\tau_1, \tau_2, \cdots\} \subset {\mathds R}_{\geq 0}: \tau_i \in [\tau_m,\tau_M] \}
\end{equation}
and it is clear that $S_{\overline{ {\mathcal H}}^{cl}_\tau}  \subset S_{[\tau_m,\tau_M]} $, being $S_{\overline{ {\mathcal H}}^{cl}_\tau}$ defined in a way similar to \eqref{eq:SHtau}. The next Proposition is a direct adaptation of Prop. 4.4  and Corollary 4.5 to $\overline{ {\mathcal H}}^{cl}_\tau$.  

\begin{proposition}
The set ${\cal A}_0$ is globally asymptotically stable for 
$\overline{ {\mathcal H}}^{cl}_\tau$,  with $[\tau_m,\tau_M] \subset [\tau_m^*,\tau_M^*]$,  if there exist a sequence of positive definite matrices $\{P_1,P_2, \cdots\}$, such that
\begin{equation}
\begin{array}{l}
  \eta I \leq P_k \leq \rho I    \\
  \\
  e^{A^T \tau_k} A_R P_{k+1} A_R e^{A \tau_k} - P_k \leq -\varepsilon I
\end{array}
\label{eq:prop43aM}
\end{equation}
hold for $k = 1, 2, \cdots$, for some positive constants $\eta$, $\rho$, and $\varepsilon$, and any $\{\tau_1, \tau_2, \cdots\} \in S_{[\tau_m^*,\tau_M^*]} $.
\end{proposition}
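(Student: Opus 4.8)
The plan is to mirror the three-stage argument used in the proof of Prop. 4.4, adjusting only for the fact that the dwell-time is now ranged rather than merely lower-bounded. First, I would invoke the same standard result from \cite{Rugh96}: the matrix inequalities \eqref{eq:prop43aM}, together with the uniform bounds $\eta I \le P_k \le \rho I$, guarantee that the time-varying quadratic Lyapunov function $V(\mathbf{z},k) = \mathbf{z}^T P_k \mathbf{z}$ certifies global asymptotic stability of the discrete-time system $\mathbf{z}^+ = -J^T A_R e^{A\tau_k} J \mathbf{z}$, $k^+ = k+1$, for every sequence $\{\tau_k\} \in S_{[\tau_m^*,\tau_M^*]}$, with an explicit decay estimate $\|\mathbf{z}(k)\|^2 \le (\rho/\eta)\lambda^{2k}\|\mathbf{z}_0\|^2$ for some $\lambda < 1$.

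The second step is the observation that distinguishes the ranged case from Prop. 4.4. For $\overline{\mathcal{H}}^{cl}_\tau$ the timer forces a jump as soon as $\tau$ reaches $\tau_M$ (since $\mathcal{C}^{cl}\times[\tau_M,\infty) \subset \overline{\mathcal{D}}^{cl}_\tau$), while jumps stay disabled below $\tau_m$; hence every reset interval of every solution satisfies $\tau_i \in [\tau_m,\tau_M]$. Consequently the associated Poincar\'e-like map uses the saturated interval $\bar{I}(\mathbf{z}) = \min\{I(\mathbf{z}),\tau_M\} \in [\tau_m,\tau_M]$, and since by hypothesis $[\tau_m,\tau_M] \subset [\tau_m^*,\tau_M^*]$, the reset interval sequence $\{\bar{I}(\mathbf{z}_0),\bar{I}(g(\mathbf{z}_0)),\dots\}$ of any solution lies in $S_{[\tau_m^*,\tau_M^*]}$. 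The decay estimate of the first step thus applies verbatim to the solutions of $\mathcal{DH}^{cl}_\tau$, yielding global asymptotic stability of $\{\mathbf{0}\}$ for the discrete system.

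Finally I would close the loop with Prop. 4.1. The only point needing care is that Prop. 4.1 was proved under Assumption $A$, whose second item requires either $A$ Hurwitz or the interval map upper bounded; in the ranged setting this is automatic, since $\bar{I} \le \tau_M$ supplies the bound $\Delta_M = \tau_M$ even when $A$ is not Hurwitz --- indeed this is precisely why the maximum dwell-time was introduced. Hence the Gronwall estimate $\|\phi(t,j)\|_{\mathcal{A}_{\mathbf{0}}} \le \|\mathbf{z}(j)\| e^{\alpha(t-t_j)}$ of Prop. 4.1 stays uniformly controlled on each flow interval, whose length is at most $\tau_M$, and global asymptotic stability of $\{\mathbf{0}\}$ for $\mathcal{DH}^{cl}_\tau$ transfers to that of $\mathcal{A}_{\mathbf{0}}$ for $\overline{\mathcal{H}}^{cl}_\tau$. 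The main --- and essentially the only --- obstacle is the bookkeeping of checking that the Poincar\'e-map construction and Prop. 4.1 carry over verbatim to $\overline{\mathcal{H}}^{cl}_\tau$ with the saturated map $\bar{I}$, which the automatic upper bound $\tau_M$ renders routine.
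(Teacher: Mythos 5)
Your proposal is correct and matches the paper's intent exactly: the paper gives no separate proof for this proposition, simply declaring it ``a direct adaptation of Prop. 4.4 and Corollary 4.5'' to $\overline{\mathcal{H}}^{cl}_\tau$, and your argument is precisely that adaptation spelled out (the discrete-time Lyapunov result of \cite{Rugh96}, the inclusion $S_{\overline{\mathcal{H}}^{cl}_\tau} \subset S_{[\tau_m^*,\tau_M^*]}$, and the transfer via Prop. 4.1). Your observation that the forced jump at $\tau_M$ automatically supplies the upper bound $\Delta_M = \tau_M$ required by Assumption $A$ when $A$ is not Hurwitz is exactly the point that makes the adaptation work, and is why the ranged dwell-time system was introduced.
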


Again, if it is considered a sequence of constants matrices $P_k = P >0$ in Prop. 4.8, then it is possible to apply some efficient methods already developed in the literature. For example, in \cite{Banos07,BBbook12} a method for obtaining a set of intervals ${[\tau_m,\tau_M]}$ is developed. Also in \cite{Briat2013} a method based on sum-of-squares conditions is given for this case of ranged dwell-time; the following Corollary is directly based on it. 

 \begin{corollary}

Assume that there exist a differentiable matrix function $R:[0,\tau_M] \rightarrow \mathds{S}^n$, $R(0) > 0$, and $\varepsilon > 0$ such that
\begin{equation}
\begin{array}{l}
A^TR(\theta) + R(\theta)A+ {\dot R}(\theta) \leq 0, \\
A_R R(0) A_R - R(\tau) \leq -\varepsilon I, 
\end{array}
\label{eq:Branged}
\end{equation}
hold for any $\theta \in [0,\tau_M^*]$ and any $\tau \in [\tau_m^*,\tau_M^*]$.
Then the set ${\cal A}_{\mathbf{0}}$ is globally asymptotically stable for $\overline{ {\mathcal H}}^{cl}_\tau$ and for any $[\tau_m,\tau_M] \subset [\tau_m^*,\tau_M^*]$.
\end{corollary}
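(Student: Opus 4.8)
The plan is to reduce the ranged dwell-time conditions \eqref{eq:Branged} to the constant-matrix instance of \eqref{eq:prop43aM} and then invoke Prop. 4.8 with $P_k \equiv P$. The natural candidate is $P = R(0)$, which is positive definite by hypothesis, so that the sandwich bound $\eta I \le P \le \rho I$ of \eqref{eq:prop43aM} holds automatically for suitable $\eta,\rho>0$. The object underlying \eqref{eq:Branged} is the timer-dependent Lyapunov function $V(\mathbf{x},\tau)=\mathbf{x}^\top R(\tau)\mathbf{x}$, whose two lines are precisely its flow and jump decrease conditions; the task is to show that tracking $V$ across one complete reset interval collapses into the single constant-$P$ discrete inequality needed by Prop. 4.8.

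First I would analyze a single reset interval. Starting from a post-jump state $\mathbf{x}_0$ with timer reset to $0$, the flow is $\mathbf{x}(s)=e^{As}\mathbf{x}_0$ while the timer equals $s$. Differentiating $e^{A^\top s}R(s)e^{As}$ and using the first inequality of \eqref{eq:Branged} (valid for every $\theta\in[0,\tau_M^*]$, hence for every $s\in[0,\tau_k]$) shows that $s\mapsto e^{A^\top s}R(s)e^{As}$ is nonincreasing in the semidefinite order, so for any admissible dwell-time $\tau_k\in[\tau_m^*,\tau_M^*]$ one obtains $e^{A^\top\tau_k}R(\tau_k)e^{A\tau_k}\le R(0)$.

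Next I would bring in the jump. Evaluating the second inequality of \eqref{eq:Branged} at $\tau=\tau_k$ gives $A_R R(0)A_R \le R(\tau_k)-\varepsilon I$; congruence by the invertible matrix $e^{A\tau_k}$ preserves the ordering, and combining with the flow bound yields
\begin{equation}
e^{A^\top\tau_k}A_R R(0)A_R e^{A\tau_k} - R(0) \le -\varepsilon\, e^{A^\top\tau_k}e^{A\tau_k}.
\end{equation}
Here I use that $A_R=JJ^\top$ is symmetric, so $A_R^\top=A_R$ and the left-hand side is exactly the expression in \eqref{eq:prop43aM} with $P_k=P_{k+1}=R(0)$. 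The main obstacle, and really the only delicate point, is to convert the right-hand side into a uniform $-\varepsilon' I$: since $\tau\mapsto e^{A^\top\tau}e^{A\tau}$ is continuous and positive definite on the compact interval $[\tau_m^*,\tau_M^*]$, its smallest eigenvalue is bounded below by some $\sigma>0$, whence setting $\varepsilon'=\varepsilon\sigma$ makes the constant-matrix instance of \eqref{eq:prop43aM} hold for every $\tau_k\in[\tau_m^*,\tau_M^*]$, i.e. for every sequence in $S_{[\tau_m^*,\tau_M^*]}$.

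Finally, with $P_k\equiv R(0)$ satisfying \eqref{eq:prop43aM}, Prop. 4.8 certifies that the set ${\cal A}_{\mathbf{0}}$ is globally asymptotically stable for $\overline{\mathcal H}^{cl}_\tau$ and for any $[\tau_m,\tau_M]\subset[\tau_m^*,\tau_M^*]$. This is exactly the chain of implications packaged by the citation to Th. 2.3 of \cite{Briat2013}; the only steps beyond that reference are the symmetry of $A_R$ and the compactness argument producing $\sigma$, together with the observation that \eqref{eq:Branged} must be verified over the full continuum of $\theta\in[0,\tau_M^*]$ and $\tau\in[\tau_m^*,\tau_M^*]$, which in practice is discharged by the sum-of-squares relaxation with $R(\theta)$ taken as a matrix polynomial.
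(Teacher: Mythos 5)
Your proof is correct and follows essentially the same route as the paper: the paper disposes of this corollary by citing the ranged dwell-time result of \cite{Briat2013} together with Prop. 4.8 in the constant-matrix case $P_k = P = R(0)$, and your argument is exactly that reduction with the cited theorem's proof written out explicitly (monotonicity of $s \mapsto e^{A^\top s}R(s)e^{As}$ along flows from the first inequality, congruence of the jump inequality by $e^{A\tau_k}$, and compactness of $[\tau_m^*,\tau_M^*]$ to replace $-\varepsilon\, e^{A^\top\tau_k}e^{A\tau_k}$ by a uniform $-\varepsilon' I$). Nothing is missing; the only difference is that you make self-contained what the paper delegates to the reference.
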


%

\begin{example}
Consider again the reset control system with the Horowitz reset controller (Fig. 6), described in section 4.2. Note that the matrix $A$ is not Hurwitz, since it has an eigenvalue in the closed right half plane. Thus, the reset control system $\overline{ {\mathcal H}}^{cl}_\tau$ is used, forcing jumps when $\tau >\tau_M$. Here,  it is found that \eqref{eq:Branged} is feasible for $\tau_m^* = 0.1$ and $\tau_M^* = 37.5$ ( the sum of squares tool SOSTOOLS, with a polynomial matrix function $R(\theta) = \sum_{i=0} ^6 R_i\theta^i$ of degree 6, has been used). 
Note that the result is somehow conservative: firstly it can be easily shown that stability is also obtained for $\tau_m$ arbitrarly small (see section 4.2); secondly, it is necessary to include a maximum-dwell time $\tau_M < \tau_M^*=37.5$ which is not necessary in the original reset control system ${\cal H}_\tau^{cl}$. In spite of that, note that in this case the reset control system ${\cal H}_\tau^{cl}$ performs jumps with reset intervals upper bounded by $\Delta_M = \frac{2\pi}{\sqrt{19}}+\tau_m$, that for $\tau_m =0.1$, a reasonable value in practice to avoid defective solutions, results in $\Delta_M \approx 1.54$, far below the limit given by $\tau_M^* = 37.5$. In other words, this result guarantees not only that $\overline{ {\mathcal H}}^{cl}_\tau$ is globally asymptotically stable for $[\tau_m,\tau_M] = [0.1,37.5]$, but also that ${\cal H}_\tau^{cl}$ is globally asymptotically stable  for $\tau_m = 0.1$.
\end{example}

\section{Conclusions}
A new model of Clegg integrator,  with an attached error zero-crossing mechanism, has been developed. This results in a reset controller model in the hybrid inclusions framework, enabling to equip the resulting reset control system with good structural properties like robustness against measurement noise and robustness in the stability. The manuscript has been focused on analysis of well-posedness and stability, adapting and extending previous work of the authors to the new reset model. More specifically, stability has been approached by analyzing stability of a Poincar\'e-like map, using two paths: a test based on the eigenvalues of matrices related with periods of reset interval sequences, and quadratic Lyapunov functions-based sufficient conditions. Both approaches have been analyzed in detail, including several examples. Although checking eigenvalues is a simple and efficient test to analyze stability, its applicability depends on the computation of a finite number of periodic points; as an interesting result, it has been formally shown that this is not always possible since in some cases reset intervals may produce chaotic sequences. As an alternative, Lyapunov function-based results may be applied: different results has been obtained for the case in which the base control system is stable or unstable. As a final conclusion, it is believed that the manuscript gives a solid framework for reset control systems with a zero-crossing resetting law, that may serve as a basis for new theoretical and practical advances.

\section*{Acknowledgments}   
It is gratefully acknowledged the helpful comments of Andrew R. Teel on an early version of the reset controller model developed in this work.  

\bibliographystyle{elsarticle-harv}
\bibliography{biblio}

\end{document}